\newtheorem{theorem}{Theorem}[section]
\newtheorem{corollary}[theorem]{Corollary}
\newtheorem{lemma}[theorem]{Lemma}
\newtheorem{proposition}[theorem]{Proposition}
\theoremstyle{definition}
\newtheorem{definition}[theorem]{Definition}
\newtheorem{remark}[theorem]{Remark}
\newcommand{\ind}{1\hspace{-2.1mm}{1}}
\newcommand{\pf}{\mathfrak{p}}
\newcommand{\B}{\mathfrak{B}}
\newcommand{\Binv}{\mathfrak{B}^{\leftarrow}}
\newcommand{\RR}{\mathbb{R}}
\newcommand{\PP}{\mathbb{P}}
\newcommand{\EE}{\mathbb{E}}
\newcommand{\K}{\mathrm{K}}
\newcommand{\D}{\mathrm{d}}
\newcommand{\Pp}{\mathcal{P}}
\newcommand{\Qq}{\mathfrak{V}}
\newcommand{\Qql}{\Qq^{\mathrm{loc}}}
\newcommand{\Vv}{\mathcal{V}}
\newcommand{\cs}{\mathfrak{c}}
\newcommand{\la}{\langle}
\newcommand{\ra}{\rangle}
\newcommand{\dm}{d_{-}}
\newcommand{\dpp}{d_{+}}
\newcommand{\vb}{\overline{v}}
\newcommand{\ts}{\tau}
\newcommand{\HH}{\mathbb{H}}
\newcommand{\BMO}{\mathrm{BMO}}
\newcommand{\Tt}{\mathcal{T}}
\newcommand{\Ct}{\widetilde{C}}
\newcommand{\Wt}{\widetilde{W}}
\newcommand{\E}{\mathrm{e}}
\newcommand{\QQ}{\mathbb{Q}}
\newcommand{\bs}{\BS}
\newcommand{\BS}{\mathrm{BS}}
\newcommand{\SVI}{\mathrm{SVI}}
\newcommand{\BSC}{\mathrm{C}^{\mathrm{BS}}}
\newcommand{\BSP}{\mathrm{P}^{\mathrm{BS}}}
\newcommand{\Nn}{\mathcal{N}}
\newcommand{\Ff}{\mathcal{F}}
\newcommand{\rrho}{\overline{\rho}}
\newcommand{\eps}{\varepsilon}
\newcommand{\bh}{\boldsymbol{\beta}}
\newcommand{\tht}{\theta_{t}}
\begin{document}

\title{Dynamics of symmetric SSVI smiles and implied volatility bubbles}
\author{Mehdi El Amrani}
\address{New York University}
\email{mehdielamrani95@gmail.com}
\author{Antoine Jacquier}
\address{Department of Mathematics, Imperial College London, and Alan Turing Institute}
\email{a.jacquier@imperial.ac.uk}
\author{Claude Martini}
\address{Zeliade Systems}
\email{cmartini@zeliade.com}
\keywords{implied volatility, absence of arbitrage, SSVI, bubbles}
\subjclass[2010]{91B25, 60H30}
\thanks{We thank Stefano De Marco and Paolo Baldi for useful discussions.}

\date{\today}
\maketitle

\begin{abstract}
We develop a dynamic version of the SSVI parameterisation for the total implied variance,
ensuring that European vanilla option prices are martingales, hence preventing the occurrence of arbitrage,
both static and dynamic.
Insisting on the constraint that the total implied variance needs to be null at the maturity of the option, 
we show that no model--in our setting--allows for such behaviour.
This naturally gives rise to the concept of implied volatility bubbles, 
whereby trading in an arbitrage-free way is only possible during part of the life of the contract, 
but not all the way until expiry.
\end{abstract}

%%%%%%%%%%%%%%%%%%%%%%%%%%%%%%%%%%%%%%%%%%%%
%%%%%%%%%%%%%%%%%%%%%%%%%%%%%%%%%%%%%%%%%%%%
\section{Introduction}
Implied volatility is at the very core of financial markets, 
and provides a unifying and homogeneous quoting mechanism for option prices.
The literature abounds in stochastic models for stock prices that generate implied volatility 
smiles--with various degrees of practical success.
Among those, the Heston model~\cite{Heston} in equity and the SABR model~\cite{SABR} 
in interest rates--together with their ad hoc and in-house improvements--have been of particular importance.
Despite this success, these stochastic models do not enjoy the simplicity of closed-form expressions, 
and dedicated numerical techniques are needed to implement them.
One way to bypass this has been to consider approximations of option prices--and the corresponding implied volatilities--in asymptotic regimes;
thorough reviews of the latter are available in~\cite{FouqueBook, FGGJT}.
A different approach, pioneered by Gatheral~\cite{GatheralSVI}, consists in specifying 
a direct parameterisation of the implied volatility, 
having the clear advantage of speeding up computation and calibration times.
The original Stochastic Volatility Inspired (SVI) formulation, devised while its inventor was at Merrill Lynch, has proved extremely efficient in fitting volatility smiles on equity markets.
That said, it was only devised as a maturity slice interpolator and extrapolator, 
and different sets of parameters were needed in order to fit a whole surface (in strike and maturity).
Gatheral and Jacquier~\cite{SSVIGat} extended it to a whole surface,
devising tractable sufficient conditions ensuring absence of arbitrage. 
The design of calibration algorithms is then easy, and this SSVI formulation has been adopted widely in the financial industry, 
and has since been extended~\cite{corbetta2019robust, eSSVI} to a version with maturity-dependent correlation.

SSVI directly tackles option prices (equivalently, implied volatilities), without following the usual route of specifying a model for the evolution of the underlying. 
It is furthermore fully static, as its inputs are market option prices 
at a given point in calendar time, with only strike and expiry allowed to vary. 
Gatheral and Jacquier~\cite{SVIHeston} showed that, as the maturity increases, 
the SVI parameterisation was in fact the true limit of the Heston smile.
A natural question is thus whether there exists a dynamic model such that at each calendar time, 
the option smiles in this model are given by SVI or SSVI.

We provide here an answer, albeit through a slightly different lens,
as we investigate whether one can impose stochastic dynamics on the implied volatility,
ensuring that arbitrage cannot occur over time.
We work in a simplified and minimal setup, 
in a perfect market with no interest rates, in continuous time, 
and consider European Call/Put options with a fixed expiry, so we will restrict ourselves to the dynamic
of a fixed smile; we will also assume that the underlying process does not distribute coupons nor dividends and does not default. 
Motivated by the discussion above, we assume that the total implied variance
has an SSVI shape at all (calendar) times before maturity, but is allowed to move stochastically,
with the condition that both the underlying and option prices should be martingales. 

This is not the first suggested solution to this problem,
and several authors have attempted to propose joint dynamics for the underlying stock price
and the implied volatility.
Motivated by empirical evidence that the implied volatility moves over time, 
the usual approach is to specify a stochastic It\^o diffusion for the total implied variance, 
as in~\cite{carr2014implied, Hafner, Ledoit, Lyons, sun2014tale, SBucher}.
However, \textit{`the problem with market models is the extremely awkward
set of conditions required for absence of arbitrage'}~\cite{Davis}.
An important step was made by Schweizer and Wissel~\cite{schweizer2008arbitrage, schweizer2008term}, 
who derived general conditions ensuring existence of such market models.
Even if the resulting conditions are not easily tractable for modelling purposes, this result is the first positive answer. 
The only other positive result (for continuous processes) we are aware of can be found in Babbar's PhD thesis~\cite{Babbar},
which, building on~\cite{Lyons}, developed stochastic models for the joint stock price 
and the total implied variance (which she calls the operational time), for a fixed strike, 
relying on comparison theorems for Bessel processes. 
One fundamental catch, though, is that the implied volatility may hit zero strictly before the maturity of the option, making the model degenerate.
We shall revisit this degeneracy somehow, giving it some financial meaning.

The Implied Remaining Variance framework~\cite{carr2014implied, CarrWu}
shares a common point of view with our approach. 
There, the shape of the dynamic of the total implied variance is prescribed, 
whereas we derive it from the shape of the smile (SSVI in our case).
Also we identify the terminal condition on the total variance at maturity as a key property, and prove that there is no process satisfying this condition in our case.
Indeed our main result is at first disappointingly negative: 
starting from an uncorrelated SSVI smile at all times, we show that no It\^o process,
beyond the Black-Scholes model with time-dependent volatility, ensures that the option prices are martingales.
The by-products of this result, however, are interesting and informative.
We obtain explicitly joint dynamics for the underlying and the option prices such that, 
locally in time (that is until  some time before the true maturity of the option), 
the underlying price is a martingale, and so are all  the vanilla option prices, despite the fact that
the option prices are not given by the expectation of the final payoffs under an equivalent martingale measure. 
This implies that until this horizon, it is not possible to synthesise an arbitrage. 
Yet, this cannot last until the maturity of the option, and the market will then change regime. 
This naturally gives rise to the new concept of implied volatility bubbles.
We believe this intermediate regime (between the traditional arbitrage-free situation with specified dynamics until maturity and a regime with instantaneous arbitrages) is of interest,
and may correspond to real world situations.

We introduce precisely the SSVI parameterisation in Section~\ref{sec:SSVI},
and recall the notions of absence of arbitrage for a given implied volatility surface.
In Section~\ref{sec:SSVIDyn}, 
we introduce a new stochastic model describing the dynamics of the implied volatility surface, 
and extend the static arbitrage concept to a dynamic version.
We show there that unfortunately there cannot be any It\^o process solution in our setting.
However, this leads us to introduce implied volatility bubbles in Section~\ref{sec:SmileBubbles},
which we study in detail in the SSVI case.

%%%%%%%%%%%%%%%%%%%%%%%%%%%%%%%%%%%%%%%
%%%%%%%%%%%%%%%%%%%%%%%%%%%%%%%%%%%%%%%
\section{Static arbitrage-free volatility surfaces}\label{sec:SSVI}
We recall in this section the key ingredients of volatility surface parameterisation
as well as the different concepts of no (static) arbitrage in this setting.
This will serve as the basis of our analysis, and allows us to define properly the notion of dynamic arbitrage,
and its relation with martingale concepts.
In order to set the notations, recall that, in the Black-Scholes model~\cite{Black} 
with volatility $\sigma>0$, 
the price of a Call option with strike~$K>0$ and maturity $T>0$ is given at time~$t \in [0,T]$ by
\begin{equation}\label{eq:CallPrice}
\BSC(S_t, K, T, \sigma)
  = \EE\left[(S_{T}-K)_+\vert \Ff_t\right]
 = S_t\BS\left(\log\left(\frac{K}{S_t}\right),\sigma\sqrt{T-t}\right),
\end{equation}
for any $t \in [0,T]$, where the function $\BS:\RR\times\RR_+\to\RR$ is defined as
\begin{equation}\label{eq:DefBS}
\BS(k,v) :=
\left\{
\begin{array}{ll}
\displaystyle
\Nn\left(d_{+}(k,v)\right) - \E^{k}\Nn\left(d_{-}(k,v)\right), 
& \text{if } v>0,\\
\left(1-\E^{k}\right)_+, & \text{if } v=0,\\
\end{array}
\right.
\end{equation}
with~$\Nn$ the Gaussian cumulative distribution function, 
and $d_{\pm}(u,v) := \frac{-u}{v} \pm \frac{v}{2}$.
Practitioners generally do not work with option prices directly,
but rather with the Black-Scholes implied volatility map~$\sigma_t:\RR_+\times\RR_+\to\RR_+$ defined through the implicit relationship
$C_t^{\mathrm{obs}}(K,T)  = S_t\BS\left(k, \sigma_t(k,T)\sqrt{T-t}\right)$,
where~$C_t^{\mathrm{obs}}(K,T)$ denotes the observed option price at time~$t$ for a given strike and maturity,
and $k:=\log(K/S_t)$ is the log-moneyness.
There exist different conventions to write the implied volatility;
for reasons that will become apparent later, we choose to write it as a function of~$k$.
For convenience, we shall in fact work in terms of the total variance 
$\omega_t(k,T) := (T-t)\sigma_t^2(k,T)$, so that the Call price formula~\eqref{eq:CallPrice}
 can be rewritten, for any $t\in [0,T]$, as
$$
C_t^{\mathrm{obs}}(K,T)  = S_t\BS\left(k, \sqrt{\omega_t(k,T)}\right).
$$

%%%%%%%%%%%%%%%%%%%%%%%%%%%%%%%%%%%%%%%%
\subsection{Static arbitrage}
Following~\cite{SSVIGat}, we recall the notion of static arbitrage, when for fixed running time~$t$, 
we only trade at~$t$ for a final maturity~$T$, but not in between.
\begin{definition}\label{def:StaticArb}
For any fixed $t\in [0,T]$, 
\begin{itemize}
\item the surface~$\omega_t(\cdot, \cdot)$ is free of calendar spread arbitrage if
$T\mapsto \omega_t(k, T)$ is increasing, for any $k\in\RR$;
\item a slice $k\mapsto \omega_t(k,T)$ is free of butterfly arbitrage if the corresponding density if non-negative.
\end{itemize}
A surface is free of static arbitrage if it is free of both calendar and butterfly arbitrages. 
\end{definition}
\noindent As hinted by its very name, this notion of arbitrage is static, 
in that it only concerns the marginal distributions of the stock price between~$t$ and~$T$, viewed at time $t$, but does not involve any dynamic behaviour in the running time~$t$.
It is equivalent to the impossibility of locking an arbitrage by trading in the option and the stock~$t$
and at the expiry of the option.
Before discussing a dynamic version of no-arbitrage, let us recall the SVI parameterisation,
a standard on Equity markets, 
which constitutes the backbone of our analysis.

%%%%%%%%%%%%%%%%%%%%%%%%%%%%%%%%%%%%%%%
\subsection{SSVI parameterisation}
Finding a parametric tractable model for a volatility smile has long been a challenge, 
and a breakthrough came when Gatheral~\cite{GatheralSVI} disclosed the SVI parameterisation
\begin{equation}\label{eq:SVI}
\omega(k,T) = \SVI(k) := a + b \left(\rho \left(k-m\right) + \sqrt{(k-m)^2 + \sigma^2} \right)
\end{equation}
for the total variance.
Since we only consider for now a fixed time~$t$, we drop the dependence thereof in the notation without confusion.
Here $a,b,\rho,m$ and $\sigma$ are parameters. 
This parameterisation only provides a characterisation of slices, 
so that the parameters are in principle different for each maturity~$T$.
The fit to market data is fairly good, and we refer the reader to~\cite{DeMarco} for an efficient and robust
dimension reduction calibration method. Necessary and sufficient conditions on the parameters preventing static arbitrage (Definition~\ref{def:StaticArb}) 
have been recently characterized in~\cite{arianna1st}.
To take into account the maturity dimension (hence the whole volatility surface, still without dynamics), Gatheral and Jacquier~\cite{SSVIGat} extended~\eqref{eq:SVI} to the Surface SVI (SSVI) parameterisation
\begin{equation}\label{eq:SSVI}
\omega(k, \theta(T)) := \frac{\theta(T)}{2} \left( 1 + \rho \varphi(\theta(T)) k + \sqrt{ (\varphi(\theta(T))k + \rho)^2 + \rrho^2} \right),
\end{equation}
where $T\mapsto\theta(T)$ is a non-decreasing and strictly positive function representing the at-the-money total implied variance, $\rho \in (-1, 1)$, $\rrho:=\sqrt{1-\rho^2}$, 
and~$\varphi$ is a smooth function from~$\RR_+^*$ to~$\RR_+^*$.
This formulation in fact enables one to find sufficient conditions to ensure absence of static arbitrage:

\begin{proposition}\label{prop:SSVINoArb}[Theorems 4.1 and 4.2 in~\cite{SSVIGat}] \ 
\begin{itemize}
\item There is no calendar spread  if
\begin{equation*}
\left\{
\begin{array}{ll}
\displaystyle \partial_t \theta(t) \geq 0,  & \text{for all } t>0,\\
\displaystyle 0 \leq \partial_{\theta} (\theta \varphi(\theta)) \leq \frac{1 + \rrho}{\rho^2} \varphi(\theta),
 & \text{for all } \theta>0;
\end{array}
\right.
\end{equation*}
\item There is no butterfly arbitrage if
$\displaystyle
\theta \varphi(\theta) \leq \min \left( \frac{4}{1+|\rho|}, 2 \sqrt{\frac{\theta}{1+|\rho|}} \right)$
for all $\theta>0$.
\end{itemize}
\end{proposition}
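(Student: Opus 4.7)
By Definition~\ref{def:StaticArb}, I need $\partial_T \omega(k, \theta(T)) \geq 0$ for all $k\in\RR$ and $T>0$. The chain rule $\partial_T \omega = \theta'(T)\,\partial_\theta \omega(k, \theta(T))$ separates the problem: with $\theta'(T) \geq 0$ assumed, it suffices to prove $\partial_\theta \omega(k, \theta) \geq 0$ pointwise under the second hypothesis. Writing $\phi := \varphi(\theta)$ and $\psi := \theta\phi$, I would differentiate~\eqref{eq:SSVI} explicitly and collect terms involving the radical $r(k) := \sqrt{(\phi k + \rho)^2 + \rrho^2}$. After clearing denominators the inequality becomes a one-variable algebraic constraint in $x := \phi k$; the extremal regime $x \to \pm\infty$, where $r \sim |x|$, produces the upper bound $\partial_\theta \psi \leq \frac{1+\rrho}{\rho^2}\phi$ (the $\rho^2$ entering through the subleading asymptotics of $r(x) - |x|$), while the minimum of $r$ near $x = -\rho$ yields the lower bound $\partial_\theta \psi \geq 0$.

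\textbf{Butterfly.} By Breeden--Litzenberger applied to $C(K,T) = S\cdot \BS(k, \sqrt{w(k)})$, butterfly-arbitrage-freeness of a slice is equivalent to non-negativity of the Durrleman--Roper function
\begin{equation*}
g(k) := \left(1 - \frac{k w'(k)}{2w(k)}\right)^2 - \frac{(w'(k))^2}{4}\left(\frac{1}{w(k)} + \frac{1}{4}\right) + \frac{w''(k)}{2},
\end{equation*}
together with appropriate tail behaviour of $\BS(k,\sqrt{w(k)})$ as $k \to \pm\infty$. Substituting~\eqref{eq:SSVI} expresses $w'$ and $w''$ as elementary functions of $\phi$, $\psi$ and $r$. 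The two stated bounds correspond to two distinct regimes: as $|k|\to\infty$, $w(k) \sim \frac{1}{2}\psi(1\pm\rho)|k|$, so Lee's moment formula (asymptotic slope bounded by $2$) forces $\psi(1+|\rho|) \leq 4$; on bounded $k$, after the reparametrisation $x := \phi k + \rho$ and a minimisation of $g$ in $x$, the worst case is controlled by $\psi \leq 2\sqrt{\theta/(1+|\rho|)}$.

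\textbf{Main obstacle.} The calendar half is mostly careful bookkeeping once the factorisation through $\psi = \theta\varphi(\theta)$ is recognised. The butterfly half is where the real effort lies: proving $g \geq 0$ globally---not just in the tails---demands either an algebraic sum-of-squares rewriting of $g$ as a function of $(x, r)$, or the identification of a single worst-case $k$ that one evaluates by hand. Cleanly disentangling the tail and interior regimes into two separate, independent conditions on $\psi$ is the technical heart of the argument.
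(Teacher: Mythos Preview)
The paper does not prove this proposition: it is stated without proof as a direct citation of Theorems~4.1 and~4.2 in~\cite{SSVIGat}, so there is nothing in the present paper to compare your argument against. Your outline is broadly consistent with the approach in that reference---chain rule plus an algebraic analysis of $\partial_\theta\omega$ in the variable $x=\varphi(\theta)k$ for the calendar part, and the Durrleman function $g(k)$ together with Lee's slope bound for the butterfly part---so as a sketch it is on the right track.

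That said, your sketch is not yet a proof, and the gaps are exactly where the work in~\cite{SSVIGat} lies. For the calendar half, saying that the $x\to\pm\infty$ regime ``produces'' the upper bound and the minimum near $x=-\rho$ ``yields'' the lower bound is suggestive but does not establish sufficiency: you must show that the two endpoint conditions on $\partial_\theta\psi$ force $\partial_\theta\omega(k,\theta)\geq 0$ for \emph{every} $k$, not merely in the asymptotic regimes. For the butterfly half, Lee's moment formula gives a \emph{necessary} tail condition, not a sufficient one; and the assertion that the interior minimum of $g$ is controlled precisely by $\psi\leq 2\sqrt{\theta/(1+|\rho|)}$ is the crux and requires an explicit algebraic verification, not a ``worst case'' heuristic. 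You correctly flag this as the main obstacle, but as written the proposal stops short of resolving it.
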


%%%%%%%%%%%%%%%%%%%%%%%%%%%%%%%%%%%%%%%%%%%%%%%
\subsubsection{Symmetric SSVI}
Setting $\rho=0$ leads to a symmetric smile in log-moneyness.
Since we consider here a single smile, we only investigate Butterfly arbitrage.
In the uncorrelated case $\rho=0$, the no-Butterfly arbitrage condition in Proposition~\ref{prop:SSVINoArb} can be simplified to
$\displaystyle 
\theta \varphi(\theta) \leq \min (4, 2\sqrt{\theta})$, 
for all $\theta>0$.
In this case, a slight improvement,
as an explicit necessary and sufficient formulation, was provided in~\cite{SSVIGat}.
Define 
\begin{equation}\label{eq:BigB}
\B(\theta) := A(\theta)\ind_{\{\theta < 4\}}+16\,\ind_{\{\theta \geq 4\}}
\end{equation}
from~$\RR_+$ to~$\RR_+$, where, for any $\theta>0$, 
$$
A(\theta) =  \frac{16 \theta \zeta_{\theta}\left(\zeta_{\theta}+1\right)}{8\left(\zeta_{\theta}-2\right) + \theta \zeta_{\theta}\left(\zeta_{\theta}-1\right)}, 
\qquad\text{with}\qquad
\zeta_{\theta} := \frac{2}{1-\theta/4} + \sqrt{\left(\frac{2}{1-\theta/4}\right)^2 + \frac{2}{1-\theta/4}}.
$$
\begin{corollary}\label{cor:ButterflyArb}
If $\rho = 0$, there is no Butterfly arbitrage if and only if
$(\theta\varphi(\theta))^2 \leq \B(\theta)$ for all $\theta>0$.
\end{corollary}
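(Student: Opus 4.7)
The plan is to reduce butterfly arbitrage to a pointwise positivity condition on an explicit function of $k$ (Durrleman's density function), and then to minimise this function over~$k$ for the symmetric SSVI slice. Recall that for any smooth total variance slice $k\mapsto \omega(k)$, the Lee--Durrleman representation of the risk-neutral density of $\log(S_T/S_t)$ gives non-negativity of the density if and only if
\begin{equation*}
g(k) := \left(1 - \frac{k\omega'(k)}{2\omega(k)}\right)^2
 - \frac{\omega'(k)^2}{4}\left(\frac{1}{\omega(k)} + \frac{1}{4}\right)
 + \frac{\omega''(k)}{2} \geq 0
\quad\text{for all } k\in\RR,
\end{equation*}
together with the asymptotic conditions $\lim_{|k|\to\infty} d_{+}(k,\sqrt{\omega(k)}) = -\infty$. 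These boundary conditions (via Lee's moment formula) are what produce the universal bound $\theta\varphi(\theta)\le 4$, i.e.\ $\psi^{2}\le 16$, which matches $\B(\theta)$ on $\{\theta\ge 4\}$.

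I would then plug the symmetric slice $\omega(k)=\tfrac{\theta}{2}\bigl(1+\sqrt{\varphi^{2}k^{2}+1}\bigr)$ into~$g$, with $\psi:=\theta\varphi(\theta)$ as the natural variable, and substitute $u:=\sqrt{\varphi(\theta)^{2}k^{2}+1}\in[1,\infty)$. After elementary (but tedious) differentiation, $g$ becomes a rational function of $u$, $\theta$ and $\psi^{2}$, and by symmetry in $k$ it is enough to study $u\ge 1$. The inequality $g\ge 0$ is then equivalent to a polynomial inequality in $u$, affine in $\psi^{2}$, which can be written as $\psi^{2}\le F(u;\theta)$ for an explicit rational function $F$.

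The core of the proof is therefore the minimisation of $F(\cdot;\theta)$ over $u\ge 1$. Solving $\partial_{u}F=0$ leads, after clearing denominators, to a quadratic whose relevant root is precisely $\zeta_{\theta}$ as defined in the statement (the closed form with the $2/(1-\theta/4)$ structure arises naturally here, which is why the regime $\theta<4$ is singled out). Substituting $u=\zeta_{\theta}$ back into $F$ yields $A(\theta)$, giving $\psi^{2}\le A(\theta)$ as the optimal constraint on $\{\theta<4\}$. One then checks, by evaluating $F$ at the endpoint $u=1$ and as $u\to\infty$, that this interior minimum is indeed global.

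The main obstacle will be the algebra in the minimisation step: showing that the critical point $\zeta_{\theta}$ obtained from $\partial_{u}F=0$ is the global minimiser rather than a spurious root or a local maximum, and tracking the computation cleanly enough to recognise the compact closed form for $A(\theta)$ announced in~\eqref{eq:BigB}. A secondary technical point is to verify continuity at the boundary $\theta=4$, i.e.\ $\lim_{\theta\uparrow 4}A(\theta)=16$, so that the two regimes in the definition of $\B$ patch together and agree with the universal asymptotic bound $\psi\le 4$.
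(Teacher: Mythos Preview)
Your outline is correct and is exactly the route taken in~\cite{SSVIGat}, which is where this corollary actually comes from: the present paper does not prove it but simply imports the result, so there is no ``paper's own proof'' to compare against beyond the citation. Your reduction via the Durrleman function~$g$, the substitution $u=\sqrt{1+\varphi^{2}k^{2}}$, the observation that $g\ge 0$ is affine in~$\psi^{2}$ (indeed $\theta\varphi^{2}=\psi^{2}/\theta$ and $\theta^{2}\varphi^{2}=\psi^{2}$, so every term is either constant or linear in~$\psi^{2}$), and the subsequent minimisation of the resulting rational function over $u\ge 1$ are precisely the steps carried out there. The identification of~$\zeta_{\theta}$ as the interior minimiser, the check that it is global, and the matching $A(4^{-})=16$ with the asymptotic Lee bound are also the points where the cited reference spends its effort, so your assessment of where the work lies is accurate.
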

Since
$\lim_{\theta \downarrow 0} \sqrt{A(\theta)/\theta}  = \cs$,
with $\cs\approx 4.45$, we have approximately a gain of a factor two with respect to the simplified sufficient condition.
We note in passing that extended versions of SSVI have since been developed~\cite{corbetta2019robust, SSVIGen, eSSVI}, 
where again sufficient conditions are provided to ensure absence of static arbitrage.

%%%%%%%%%%%%%%%%%%%%%%%%%%%%%%%%%%%%%%%
%%%%%%%%%%%%%%%%%%%%%%%%%%%%%%%%%%%%%%%
\section{Dynamic arbitrage-free volatility surfaces}\label{sec:SSVIDyn}
Static arbitrage is by now well understood and has contributed 
to providing valid examples to generate market options data and to design interpolators and extrapolators of option quotes.
In the static setting above, no dynamics was set for any of the ingredients.
We now extend this framework to a dynamic setting, where both the stock price and the implied volatility evolve.
We fix a filtered probability space $(\Omega, \Ff, (\Ff_t), \QQ)$ 
on which all processes and Brownian motions are well defined,
and consider a stochastic stock price adapted to~$(\Ff_t)$.
The main novelty of our approach, which is common for the moment with  the early works by Carr and Sun~\cite{carr2014implied} and Carr and Wu~\cite{CarrWu}, is to impose some dynamics
for the total implied variance $(\omega_t(k, T))_{t\in [0, T]}$.
The maturity~$T>0$ is fixed throughout the paper, and hence the implied volatility at time~$t$ 
only makes sense for $t\in [0, T)$.
We first introduce the concept of dynamic arbitrage without specifying any dynamics.

%%%%%%%%%%%%%%%%%%%%%%%%%%%%%%%%%%%%%%%
\subsection{Dynamic arbitrage and consistent total variance models.}
One key point is that we now write $k_t:=\log(K/S_t)$
instead of~$k$ for the log-moneyness, emphasising the importance of the running time~$t$.

\begin{definition}\label{def:Consistent}
A consistent total variance model is a couple $(S_t, \omega_t(k_t, T))_{t \in [0, T], K>0}$ such that, up to~$T$,
\begin{enumerate}[(i)]
\item the process $S$ is a strictly positive $\QQ$-martingale with continuous sample paths;
\item for every $K>0$, the process $\omega_{\cdot}(k_{\cdot},T)$ has continuous paths and is 
strictly positive on $[0,T)$;
\item for every $K>0$, $\omega_t(k_t,T)$ converges to zero almost surely as~$t$ approaches~$T$;
\item for every $K>0$, the process~$C$ defined by 
$C_t :=S_t\BS\left(k_t, \sqrt{\omega_t(k_t,T)}\right)$ is a $\QQ$-martingale.
\end{enumerate}
We denote~$\Qq_T$ the set of all consistent total variance models,
and no dynamic arbitrage occurs if~$\Qq_T\ne\emptyset$.
\end{definition}

By Put-Call-Parity we can equivalently replace the last item above by the martingale property of the Put price process, directly through the Black-Scholes Put pricing function. 
The following useful remark relaxes Condition~(iv) above from, replacing it effectively by a local martingale assumption:
\begin{lemma}\label{truemart}
Let $K>0$ and assume that  the process $(C_t)_{t\in [0,T]}$ defined by 
$C_t :=S_t\BS\left(k_t, \sqrt{\omega_t(k_t,T)}\right)$ is a $\QQ$-local martingale.
Then if~$S$ is a martingale, so is~$C$.
\end{lemma}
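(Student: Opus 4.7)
The plan is a standard domination argument, taking advantage of the fact that the Black--Scholes call pricing function is bounded by~$1$, so that~$C$ is trapped between~$0$ and the true martingale~$S$.

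First I would show that $0\le \BS(k,v)\le 1$ for all $k\in\RR$ and $v\ge 0$. From the formula~\eqref{eq:DefBS}, nonnegativity is clear (this is a normalised call price); for the upper bound, one notes that $\BS(k,v)=\Nn(d_+(k,v))-\E^{k}\Nn(d_-(k,v))\le \Nn(d_+(k,v))\le 1$ for $v>0$, and the inequality is trivial when $v=0$. Plugging this into the definition of~$C$ gives the pointwise bound
\begin{equation*}
0\le C_t=S_t\,\BS\bigl(k_t,\sqrt{\omega_t(k_t,T)}\bigr)\le S_t,\qquad t\in[0,T].
\end{equation*}

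Next I would use that~$S$, being a martingale on the finite horizon $[0,T]$, is automatically uniformly integrable on $[0,T]$ (it is closed by~$S_T\in L^1$). Therefore, by the optional stopping theorem, the family $\{S_{t\wedge\tau}:\tau\ \text{stopping time with }\tau\le T\}$ is uniformly integrable. By the domination above, the same is true for~$\{C_{t\wedge\tau}\}$.

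Finally, let $(\tau_n)$ be a localising sequence for the local martingale~$C$, so that $C^{\tau_n}$ is a true martingale and $\EE[C_{t\wedge\tau_n}\mid\Ff_s]=C_{s\wedge\tau_n}$ for all $0\le s\le t\le T$. By path-continuity of~$C$ we have $C_{t\wedge\tau_n}\to C_t$ almost surely as $n\to\infty$, and by the uniform integrability just established this convergence holds in~$L^1$. Passing to the limit in the conditional expectation (which is continuous on~$L^1$) yields $\EE[C_t\mid\Ff_s]=C_s$, proving that~$C$ is a true martingale. The only step that requires any care is the $L^1$-convergence, and this is precisely where the domination $C\le S$ combined with the UI property of the closed martingale~$S$ does all the work; there is no genuine obstacle beyond this observation.
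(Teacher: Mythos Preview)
Your proof is correct, but it differs from the paper's approach. The paper argues via put--call parity: define $P_t:=C_t-(S_t-K)$; since~$C$ is a local martingale and~$S$ is a martingale, $P$ is a local martingale, and because~$P_t$ is a Black--Scholes put price it satisfies $0\le P_t\le K$. A bounded local martingale is a true martingale, and then $C=P+(S-K)$ is a martingale as the sum of two martingales.

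You instead work directly with the bound $0\le C_t\le S_t$ and invoke uniform integrability of the stopped family of the closed martingale~$S$ to pass to the limit along a localising sequence. Both arguments exploit the same underlying analytic fact (the Black--Scholes price is trapped between intrinsic value and spot), but the paper's subtraction of $(S-K)$ converts the problem into a \emph{constant}-bounded local martingale, which bypasses the UI machinery entirely. Your route is slightly longer but perfectly valid; the paper's is a one-liner once the put price is identified.
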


\begin{proof}
Indeed the process~$P$ defined as $P_t:=C_t -(S_t-K)$ is a local martingale which is positive and uniformly bounded by $K$, hence a martingale, and therefore $C = P + (S-K)$ is a martingale as well.
\end{proof}

At first glance, there is no link between the dynamics of each option contract (indexed by~$K$), 
so that the option could evolve in an inconsistent manner even when starting from a static arbitrage-free configuration; 
this is actually not the case due to Definition~\ref{def:Consistent}(iii):
\begin{lemma} \label{DynamicImpliesStatic}
Absence of dynamic arbitrage implies absence of Butterfly arbitrage.
\end{lemma}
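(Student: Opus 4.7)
The plan is to show that any element of $\Qq_T$ automatically represents call prices as conditional expectations of the terminal payoff, from which butterfly convexity is immediate.

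First I would identify the terminal value of the call price process. Since $\BS$ is continuous on $\RR\times\RR_+$ and reduces at $v=0$ to the intrinsic payoff factor $(1-\E^{k})_+$, Condition~(iii) of Definition~\ref{def:Consistent} combined with the a.s.\ continuity of $S$ (hence of $k_t=\log(K/S_t)$) yields
\begin{equation*}
\lim_{t \uparrow T} C_t \;=\; \lim_{t \uparrow T} S_t\,\BS\!\left(k_t,\sqrt{\omega_t(k_t,T)}\right) \;=\; S_T\left(1-\E^{k_T}\right)_+ \;=\; (S_T-K)_+\quad \text{a.s.}
\end{equation*}

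Next I would upgrade this a.s.\ limit to convergence in $L^1$. The process $C$ is a $\QQ$-martingale on $[0,T)$ by Condition~(iv). Using $\BS(k,v)\leq \Nn(d_+(k,v))\leq 1$, one has the pathwise bound $0\leq C_t\leq S_t$. Because $S$ is a $\QQ$-martingale on $[0,T]$ by Condition~(i), the family $(S_t)_{t\in[0,T]}$ is uniformly integrable (it is closed at $T$ by $S_T$), so the dominated family $(C_t)_{t\in[0,T)}$ is also uniformly integrable. The martingale convergence theorem then gives $C_t\to (S_T-K)_+$ in $L^1$, and for every $t\in[0,T)$,
\begin{equation*}
C_t \;=\; \EE\!\left[(S_T-K)_+\,\big|\,\Ff_t\right].
\end{equation*}

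Finally, I would conclude via Breeden--Litzenberger. The function $K\mapsto (S_T-K)_+$ is convex, and convexity is preserved by conditional expectation, so $K\mapsto C_t(K)$ is almost surely convex in $K$ for each fixed $t$. This convexity is exactly the statement that the risk-neutral density implied by the call price surface at time $t$ is non-negative, which is the definition of absence of butterfly arbitrage on the slice $k\mapsto \omega_t(k,T)$ given in Definition~\ref{def:StaticArb}.

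The main technical obstacle is the justification of the $L^1$ closure of the martingale $C$ at time $T$, since a priori $C$ is only defined on $[0,T)$; the domination $C_t\leq S_t$ together with the closability of $S$ is precisely what handles this cleanly, and is the step that is most sensitive to the assumptions baked into Definition~\ref{def:Consistent} (in particular, the true martingale requirement on $S$, rather than merely local martingale, which is reminiscent of the reduction granted by Lemma~\ref{truemart}).
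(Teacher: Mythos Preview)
Your proof is correct and follows the same strategy as the paper: show that $C_t = \EE_t[(S_T-K)_+]$ and then deduce convexity in $K$. The only difference is in the $L^1$ closure at $T$: you dominate $C_t$ by the uniformly integrable martingale $S_t$, whereas the paper passes to the Put price $P_t$, which is uniformly bounded by $K$ (so dominated convergence is immediate), and then recovers the Call representation via put-call parity; both routes are equally valid, and yours avoids the detour through the Put at the cost of invoking uniform integrability of $S$.
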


\begin{proof}
We claim that the Call price is the conditional expectation of the payoff $(S_T-K)_+$.
Consider a Put option with price~$P_t := \BSP(S_t, K, \omega_t(k_t,T))$, 
where~$\BSP$ denotes the Black-Scholes Put option price. 
In absence of dynamic arbitrage, Definition~\ref{def:Consistent}
implies that
$P_t = \EE_t^{\QQ}[P_T]$ and 
$P_T = \BSP(S_T, K, \omega_T(k_T,T)) = \BSP(S_T, K, 0) = (K-S_T)_+$ almost surely. 
Since the payoff~$P_T$ is uniformly bounded by~$K$,  
then dominated convergence implies that $P_t = \EE_t[(K-S_T)_+]$.
Since 
$C_t(K,T) = \BSC(S_t, K, \omega_t(k_t,T))
 = S_t-K - \BSP(S_t, K, \omega_t(k_t,T)) =  S_t-K - \EE_t[(K-S_T)_+]
  = \EE_t[(S_T-K)_+]$, 
  the claim follows.
\end{proof}
Definition~\ref{def:Consistent}(iii) is not present in~\cite{carr2014implied} where nothing prevents inconsistent situations to occur.
If Definition~\ref{def:Consistent}-(i)-(iv) hold, each individual price being a martingale, 
no arbitrage can be exploited from trading individually in the options or stocks. 
Yet Butterfly arbitrage (even in its simple form of the non-monotonicity of the Call option price with respect to the strike) could occur and be exploited. 
As an example, consider two Call options with maturity~$T$, 
strikes~$K_1$ and~$K_2$, with $0<K_1<K_2$, and with dynamics given by 
$$
C_t(K_1,T) = C_0(K_1) \exp\left\{\sigma_1 B^{(1)}_{t} - \frac{1}{2} \sigma_1^2 t \right\}
\qquad\text{and}\qquad
C_t(K_2,T) = C_0(K_2) \exp\left\{\sigma_2 B^{(2)}_{t} - \frac{1}{2} \sigma_2^2 t \right\},
$$
for $t\in [0,T]$, given two independent Brownian motions~$B^{(1)}$ and~$B^{(2)}$. 
Assume further that~$S$ follows yet another Black-Scholes-type dynamics 
$S_t = S_0 \exp\{\sigma_2 \overline{B}_{t} - \frac{1}{2} \sigma_2^2 t \}$
where~$\overline{B}$ is a Brownian motion independent of~$B^{(1)}$ and~$B^{(2)}$, 
such that $(S_0-K_2)_+ < C_0(K_2) < C_0(K_1)<S_0$.
For $i=1,2$ introduce the exit times 
$$
\tau_i := \inf\left\{t\in[0,T): C_t(K_i) \notin \left((S_t - K_i)_+, S_t\right)\right\},
$$
and, for any $\eps>0$, the crossing time
$\widetilde{\tau}_\eps := \inf\{ t: C_t(K_2)>C_t(K_1)+\eps \}$. 
Then $\{\widetilde{\tau}_\eps < \tau_1 \wedge \tau_2 \}$ has positive probability, 
and prices become inconsistent at~$\widetilde{\tau}_\eps$, although each individual option process is a martingale.

%%%%%%%%%%%%%%%%%%%%%%%%%%%%%%%%%%%%%%%
\subsection{The  dynamic symmetric SSVI}
In order to be more precise, we now specify some dynamics:
\begin{equation}\label{eq:SDES}
\D S_t = S_t\sqrt{v_t}\,\D B^S_t,
\end{equation}
starting without loss of generality from $S_0=1$, for some Brownian motion~$B^S$.
Here the process~$(v_t)_{t\geq 0}$ is left unspecified, 
but regular enough (and non-negative) so that~\eqref{eq:SDES} admits a unique weak solution.
Since we want the process~$S$ to be a true martingale, we impose the Novikov condition 
\begin{equation} \label{Novikov}
\EE\left[\exp\left\{\frac{1}{2}\int_0^T v_t \D t\right\}\right] < \infty.
\end{equation}
We consider a dynamic version of the uncorrelated SSVI parameterisation~\eqref{eq:SSVI}, namely
\begin{equation}\label{eq:SSVIDyn}
\omega_t(k_t, \theta_t)= \frac{\tht}{2}\left(1+\sqrt{1+\varphi^2_t k_t^2}\right).
\end{equation}
where, similar to~\eqref{eq:SSVI}, $\tht$ accounts now for the at-the-money total implied variance
at~$t$ for an option maturing at~$T$.
We implicitly disentangled here the link between the function~$\varphi_t$
and the curve $T\mapsto\theta_t$, by introducing the process notation~$(\varphi_t)$.
We keep this terminology from now on, and reverting back to the classical SSVI~\eqref{eq:SSVI} boils down to a simple change of variables.
We further assume that~$\theta$ and~$\varphi$ are diffusion processes given by
\begin{equation}\label{eq:thetaphiDyn}
\left\{\begin{array}{rll}
\D\theta_t & = \displaystyle \theta_{1,t} \D t + \theta_{2,t}\D B_t^{\theta}, & \theta_0 >0,\\
\D\varphi_t & = \displaystyle \varphi_{1,t} \D t + \varphi_{2,t}\D B_t^{\varphi}, & \varphi_0>0,\\
\D\la B^{\theta}, B^{\varphi}\ra_t & = \displaystyle \varrho \, \D t,
\end{array}
\right.
\end{equation}
where $B^{\theta}$ and $B^{\varphi}$ are two Brownian motions.
The time-dependent coefficients $\theta_1, \theta_2, \varphi_1, \varphi_2$ are left unspecified,
and may be stochastic, adapted to the filtration $(\Ff_t)_{t\in [0,T]}$ and such that the two stochastic diffusions admit unique weak solutions.
Therefore

\begin{lemma}\label{le:thetazero}
A necessary condition for Definition~\ref{def:Consistent}(iii) is that $\theta$ converges to zero
amost surely at time~$T$.
\end{lemma}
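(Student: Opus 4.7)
The plan is to exploit the structural lower bound built into the SSVI functional form \eqref{eq:SSVIDyn}. The square-root term $\sqrt{1+\varphi_t^2 k_t^2}$ is bounded below by $1$ regardless of the values of $\varphi_t$ and $k_t$, so that, pointwise and almost surely for every $t\in[0,T)$ and every strike $K>0$,
\begin{equation*}
\omega_t(k_t,\theta_t)=\frac{\theta_t}{2}\left(1+\sqrt{1+\varphi_t^2 k_t^2}\right)\geq \frac{\theta_t}{2}(1+1)=\theta_t.
\end{equation*}
This inequality simply says that the symmetric SSVI smile attains its minimum in the log-moneyness variable at $k=0$, where the total variance equals $\theta_t$.

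Once the bound is in hand, I would fix any strike $K>0$ (any single choice suffices) and invoke Definition~\ref{def:Consistent}(iii), which ensures $\omega_t(k_t,T)\to 0$ almost surely as $t\uparrow T$. Combining this with the bound above yields
\begin{equation*}
0\leq \theta_t \leq \omega_t(k_t,T)\xrightarrow[t\uparrow T]{\text{a.s.}} 0,
\end{equation*}
so that $\theta_t\to 0$ almost surely, as claimed.

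There is essentially no obstacle: the statement is a direct consequence of the functional shape of the parameterisation and does not require any martingale argument, stochastic calculus, or use of the dynamics \eqref{eq:thetaphiDyn} beyond the standing positivity of $\theta$. One small sanity check worth including is that the argument genuinely needs only one fixed $K$ (not the whole continuum), which is compatible with Definition~\ref{def:Consistent}(iii) as stated. As a side remark that I would likely not spell out in the formal proof, the same type of conclusion would hold in the non-symmetric SSVI case $\rho\neq 0$, since the smile still attains a finite minimum in $k$ that is a fixed multiple of $\theta_t$; the symmetry just makes the constant equal to $1$.
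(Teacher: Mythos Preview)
Your argument is correct and is exactly the observation the paper has in mind: the paper does not spell out a proof for this lemma but simply introduces it with ``Therefore'', relying implicitly on the structural lower bound $\omega_t(k_t,\theta_t)\geq\theta_t$ coming from~\eqref{eq:SSVIDyn}. Your write-up makes this explicit, and the ingredients (positivity of~$\theta$, the inequality $\sqrt{1+\varphi_t^2 k_t^2}\geq 1$, and a single fixed strike) are precisely what is needed.
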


We investigate here the existence of consistent total variance models 
of the form~\eqref{eq:SDES}-\eqref{eq:SSVIDyn}-\eqref{eq:thetaphiDyn}.
Given the symmetry of the implied volatility~\eqref{eq:SSVIDyn}, 
we expect~$\theta$ and~$\varphi$ to depend solely on~$v$ and on its driving Brownian, 
assumed independent of~$B^S$.
The maturity~$T$ does not come into play explicitly in our parameterisation, 
but is present through~$\theta_t$ and~$\varphi_t$. 
The martingale~$S$ induces the new measure $\D\QQ = S_T \D\PP$,
and Girsanov's theorem implies that the process~$\widetilde{W}$ defined by
$\D\Wt_t := \D B^S_t -\sqrt{v_t} \D t$ is a $\QQ$-Brownian motion. 
Hence for any given $(K,T)$, a Call option~$C_{\cdot}(K,T)$ is a $\PP$-martingale if and only if
the process~$\Ct_{\cdot}(K,T)$, defined as 
$\Ct_t(K,T):=C_{t}(K,T)/S_t = \bs(k_t,\sqrt{\omega_t(k_t,T)})$,
is a $\QQ$-martingale. 
The terminal condition on the Call prices under~$\PP$ is that
$S_t \bs(k_t,\sqrt{\omega_t(k_t,T)})$ converges to
the intrinsic payoff $(S_T-K)_+$ almost surely as~$t$ tends to~$T$, 
which is granted as soon as $\omega_t(k_t,T)$ converges to zero almost surely.
The objective is to find conditions on the parameters~$\theta_{1,t}$, $
\theta_{2,t}$, $\varphi_{1,t}$, $\varphi_{2,t}$ and~$\varrho$ in~\eqref{eq:thetaphiDyn} ensuring no dynamic arbitrage.

\begin{theorem}\label{thm:thetaPhiDynamic}
If there is a consistent total variance model, in the sense of Definition~\ref{def:Consistent}, 
then necessarily,
\begin{equation}\label{eq:SolutionSystem}
\left\{
\begin{array}{rl}
\D\theta_t & = \displaystyle \frac{(\theta_t\varphi_t-4)(\theta_t\varphi_t+4)}{16}v_t\D t - \theta_t\varphi_t\sqrt{v_t}\,\D B_t,\\
\D\varphi_t & = \displaystyle \left(16+16\varphi_t^2\theta_t-\theta_t^2\varphi_t^2\right)\frac{\varphi_t v_t}{16\theta_t}\D t+\varphi_t^2\sqrt{v_t}\,\D B_t,
\end{array}
\right.
\end{equation}
where~$B$ is a Brownian motion independent from~$B^S$. 
\end{theorem}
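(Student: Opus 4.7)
The strategy is to translate Definition~\ref{def:Consistent}(iv) into a differential identity in the log-moneyness $k$ that forces the coefficients of \eqref{eq:thetaphiDyn}. Passing to the share measure $\QQ$ (under which $B^\theta,B^\varphi$ remain Brownian motions by the independence assumption, and $\D k_t = -\sqrt{v_t}\,\D\Wt_t - \tfrac12 v_t\,\D t$ by Girsanov), the normalised Call price $\Ct_t := \bs(k_t,\sqrt{\omega_t(k_t,T)})$ must be a $\QQ$-local martingale, hence a true martingale by Lemma~\ref{truemart}. Writing $\Ct_t = \Phi(k_t,\Omega_t)$ with $\Phi(k,\omega):=\bs(k,\sqrt{\omega})$ and $\Omega_t := \omega(k_t,\theta_t,\varphi_t)$, the problem reduces to imposing the vanishing of the $\QQ$-drift of $\Ct_t$.

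Applying It\^o's formula and using the Black--Scholes PDE $\Phi_\omega = \tfrac12 (\Phi_{kk}-\Phi_k)$ to eliminate $\Phi_{kk}-\Phi_k$, the drift becomes a linear combination of $\Phi_\omega,\Phi_{k\omega},\Phi_{\omega\omega}$. Dividing by the strictly positive $\Phi_\omega$ and using the closed-form ratios
\[
\Phi_{k\omega}/\Phi_\omega = d_+/\sqrt{\omega},
\qquad
\Phi_{\omega\omega}/\Phi_\omega = k^2/(2\omega^2) - 1/(2\omega) - 1/8,
\]
(obtained by direct differentiation of the identity $\Phi_\omega = \varphi_{\mathcal{N}}(d_+)/(2\sqrt\omega)$, where $\varphi_{\mathcal N}$ is the Gaussian density), the vanishing-drift condition becomes a scalar identity $F(k;\theta_t,\varphi_t,v_t,\theta_{1,t},\theta_{2,t},\varphi_{1,t},\varphi_{2,t},\varrho) = 0$ that must hold for every $k \in \RR$ and every $t \in [0,T)$.

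The choice $\rho=0$ in the SSVI shape makes $F$ even in $k$; Taylor-expanding $\omega = \theta + \theta\varphi^2 k^2/4 + O(k^4)$ together with its partial derivatives $\omega_k,\omega_{kk},\omega_\theta,\omega_\varphi,\omega_{\theta\varphi},\omega_{\varphi\varphi}$, and matching the coefficients of $k^0, k^2, k^4$, one obtains a finite algebraic system in the five unknowns $\theta_{1,t}, \varphi_{1,t}, \theta_{2,t}^2, \varphi_{2,t}^2, \varrho\theta_{2,t}\varphi_{2,t}$. The $k^0$ coefficient determines $\theta_{1,t}$; the $k^2$ and $k^4$ coefficients force $\theta_{2,t}^2 = \theta_t^2\varphi_t^2 v_t$, $\varphi_{2,t}^2 = \varphi_t^4 v_t$, $\varrho\theta_{2,t}\varphi_{2,t} = -\theta_t\varphi_t^3 v_t$, and fix $\varphi_{1,t}$. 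The equality $(\varrho\theta_{2,t}\varphi_{2,t})^2 = \theta_{2,t}^2 \varphi_{2,t}^2$ saturates the Cauchy--Schwarz inequality, so $\varrho = \pm 1$: $B^\theta$ and $B^\varphi$ coincide up to a sign. Aggregating them into a single Brownian motion $B$ (independent of $B^S$ by hypothesis) with sign conventions ensuring $\D\varphi_t = \varphi_t^2\sqrt{v_t}\,\D B_t$ then produces $\D\theta_t = -\theta_t\varphi_t\sqrt{v_t}\,\D B_t$, and back-substitution yields the drifts in \eqref{eq:SolutionSystem}.

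\textit{Principal obstacle.} The delicate part is the coefficient matching: several non-obvious cancellations occur between the $v_t\omega_k\, d_+/\sqrt\omega$ contribution and the product of the $\Phi_{\omega\omega}$-coefficient with $k^2/(2\omega^2)-1/(2\omega) -1/8$, and one must expand to order $k^4$ to disentangle $\varphi_{1,t}$ from the $\varrho$-coupling term. It also remains to verify a posteriori that the coefficients at $k^6, k^8, \ldots$ vanish automatically once the first three relations are imposed; this consistency is nontrivial and ultimately reflects the rigid algebraic structure of the symmetric SSVI slice (in particular, the fact that $\omega_\theta = \omega/\theta$ makes $\theta$ enter only as an overall scale at each order).
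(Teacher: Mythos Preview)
Your overall strategy---apply It\^o to the normalised Call price under the share measure, set the drift to zero, and read off the dynamics as an identity in the moneyness---is exactly the paper's. Two gaps, however, keep the proposal from being a complete proof.

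First, you \emph{assume} that $B^\theta, B^\varphi$ are independent of $B^S$ (``by the independence assumption'', ``independent of $B^S$ by hypothesis''), whereas the theorem asserts this as a \emph{conclusion}. The paper deduces it from the symmetry of the smile: writing the drift once with $Y_t = +\sqrt{Z_t^2-1}$ and once with $Y_t = -\sqrt{Z_t^2-1}$ (where $Z_t := \sqrt{1+\varphi_t^2 k_t^2}$) and subtracting, one obtains a system that forces $\D\langle\theta, k\rangle_t = \D\langle\varphi, k\rangle_t = 0$. Without this step the independence clause is unproved.

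Second, your coefficient-matching is underdetermined as stated. Three Taylor coefficients ($k^0, k^2, k^4$) yield three scalar equations, but you list five unknowns $(\theta_{1,t}, \varphi_{1,t}, \theta_{2,t}^2, \varphi_{2,t}^2, \varrho\theta_{2,t}\varphi_{2,t})$; in particular the $k^0$ equation already couples $\theta_{1,t}$ with $\theta_{2,t}^2$ (via the $\Phi_{\omega\omega}/\Phi_\omega$ term applied to $\omega_\theta^2\theta_{2,t}^2$ at $k=0$), so it cannot isolate $\theta_{1,t}$ on its own. You then face an infinite tail of conditions at $k^6, k^8, \ldots$ which you concede is ``nontrivial'' but do not carry out. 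The paper sidesteps this entirely: the substitution $Z_t := \sqrt{1+\varphi_t^2 k_t^2}$ turns the drift into $Z_t^2(1+Z_t)^2\,\Pp(Z_t)\,\D t$ with $\Pp$ a genuine polynomial of degree five in $Z_t$. Setting its six coefficients to zero gives a closed finite system---overdetermined by one, which is precisely what forces $\varrho \in \{-1,1\}$---with no residual hierarchy to verify. Recognising that the SSVI square-root structure makes the drift polynomial in $Z$ rather than merely analytic in $k$ is the key algebraic device your plan is missing.
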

We stress that the statement is only necessary. 
Nothing grants the existence of an actual solution; 
moreover, even if one exists, the following three conditions should be checked 
for the solution to be valid:
\begin{itemize}
\item both processes $\theta$ and $\varphi$ should be positive almost surely;
\item the no-arbitrage Condition~\ref{cor:ButterflyArb} should hold;
\item the boundary condition $\theta_T$ should be null almost surely;
\end{itemize}

Lemma~\ref{lem:Existence} below in fact shows that existence is a real issue.
An important remark here is that the Brownian motion~$B$ may not be related to the dynamics of the variance process~$(v_t)$, 
as the latter does not come into play at any stage in the computations. 
We will discuss this more in detail in Section~\ref{sec:SmileBubbles} below.

\begin{proof}[Proof of Theorem~\ref{thm:thetaPhiDynamic}]
To simplify the computations, introduce the notations
$$
Y_t := \varphi_t k_t,\qquad
\eta_t : = h(\theta_t),\qquad
\gamma_t := f(Y_t),\qquad
\Omega_t := \gamma_t \eta_t,\qquad
f(y):=\sqrt{1+\sqrt{1+y^2}}, \qquad
h(\theta):=\sqrt{\theta/2}.
$$
This implies that, in the symmetric SSVI framework~\eqref{eq:SSVIDyn}, 
the Call price function~\eqref{eq:CallPrice} simplifies to
$$
C_t^{\mathrm{obs}}(K,T) = S_t \bs\left(k_t,\Omega_t\right) =: S_t \Ct_t(k_t, \Omega_t).
$$
It\^o's formula implies that, for any (fixed) $k\in\RR$ and any $t>0$, we can write
$$
\D\Ct_t = \partial_1 \bs(k_t, \Omega_t) \D k_t + \partial_2 \bs(k_t, \Omega_t) \D\Omega_t + \frac{1}{2} \partial_{11}^2 \bs(k_t, \Omega_t) \D\la k\ra_t + \frac{1}{2} \partial_{22}^2 \bs(k_t, \Omega_t) \D\la\Omega\ra_t + \partial_{12}^2 \bs(k_t, \Omega_t) \D\la k, \Omega\ra_t.
$$
The derivatives of the $\bs(\cdot, \cdot)$ function are classical and straightforward:
\begin{equation*}
\begin{array}{rlrl}
\displaystyle \partial_1 \bs(u,v) & = \displaystyle -\E^{u} \Nn(\dm),
\qquad & \displaystyle \partial_2 \bs(u,v) & = \displaystyle n(\dpp),\\
\displaystyle \partial_{11}^2 \bs(u,v) & = \displaystyle -\E^{u} \Nn(\dm)+\frac{n(\dpp)}{v},
\qquad & \displaystyle \partial_{12}^2 \bs(u,v) & = \displaystyle \left(\frac{1}{2}-\frac{u}{v^2}\right) n(\dpp),\\
\displaystyle \partial_{22}^2 \bs(u,v) & = \displaystyle \left(\frac{u^2}{v^3} -\frac{v}{4}\right) n(\dpp).
\end{array}
\end{equation*}
Now, we can write the dynamics for all the processes appearing in this equation as
\begin{equation*}
\left.
\begin{array}{rlrl}
\D k_t & = \displaystyle \sqrt{v_t} \D\Wt_t -\frac{v_t}{2}\D t
 & \D\la k\ra_t & = \displaystyle v_t \D t,\\
\D Y_t & \displaystyle = \varphi_t \D k_t + \frac{Y_t}{\varphi_t} \D\varphi_t + \D\la \varphi, k\ra_t,
 & \D\la Y\ra_t & \displaystyle = \varphi_t^2 \D\la k\ra_t + \left(\frac{Y_t}{\varphi_t}\right)^2 \D\la \varphi\ra_t + 2 Y_t \D\la \varphi, k\ra_t,\\
\D\gamma_t & = \displaystyle f'(Y_t)\D Y_t + \frac{1}{2} f''(Y_t)\D\la Y\ra_t,
 & \D\la\gamma\ra_t & = \displaystyle f'(Y_t)^2 \D\la Y\ra_t,\\
\D\Omega_t & = \displaystyle \eta_t \D\gamma_t + \gamma_t\D\eta_t + \D\la\eta, \gamma\ra_t,
 & \D\la\Omega\ra_t & = \eta_t^2 \D\la\gamma\ra_t + \gamma_t^2 \D\la\eta\ra_t + 2 \eta_t \gamma_t \D\la\eta,\gamma\ra_t,\\
\D\eta_t & \displaystyle = h'(\theta_t)\D\theta_t + \frac{1}{2} h''(\theta_t)\D\la\theta\ra_t,
 & \D\la\eta\ra_t & \displaystyle = h'(\theta_t)^2 \D\la \theta\ra_t,\\
\D\la k,\Omega\ra_t & \displaystyle = \eta_t f'(Y_t)\left(\varphi_t \D\la k\ra_t
 + \frac{Y_t}{\varphi_t} \D\la k,\varphi\ra_t\right) + \gamma_t \D\la k,\eta\ra_t,
  &  \D\la\eta, \gamma\ra_t
 & = \displaystyle h'(\theta_t) f'(Y_t) \left(\varphi_t \D\la \theta, k\ra_t
   + \frac{Y_t}{\varphi_t} \D\la \theta, \varphi\ra_t\right)\\
\D\la k, \eta\ra_t & = h'(\theta_t)\D\la k, \theta\ra_t.
\end{array}
\right.
\end{equation*}
In order for the option price to be a martingale, the drift part~$\D\Ct$ has to be equal to zero.
Assume now, as in the proposition, that
$\la\theta, \varphi\ra_t = \psi_t = \la\theta, k\ra_t = 0$.
Long and tedious computations (that we perform with the counter checks of Sympy) show that the latter can be written as 
$Z_t^2(1+Z_t)^2\Pp(Z_t)\D t$, where~$\Pp(\cdot)$ is a fifth-order polynomial, 
where $Z_t:=\sqrt{1+Y_t^2}$.
In particular, each coefficient~$\Pp_{i}$ of order $i=0,\ldots, 5$ read
\begin{equation*}
\begin{array}{rcl}
\Pp_5 & = & 
\left(\pf_t - 16\right)\left(\varphi_t^2\theta_{2,t}^2+\theta_t^2\varphi_{2,t}^2+2\sqrt{\pf_t}\chi_t\right),\\
\Pp_4 & = & -2\theta_t\Big\{
\left(8\varphi_t\sqrt{\pf_t} - \pf_t -16\right)\varphi_t\chi_t
-\varphi_t^4 \theta_t \theta_{2,t}^2
+16\varphi_t^4 \theta_t \theta_{1,t}
-4\varphi_t^4 \theta_{2,t}^2
+16\varphi_t \pf_t\varphi_{1,t}
-4\pf_t\varphi_{2,t}^2
-16\theta_t \varphi_{2,t}^2
\Big\},\\
\Pp_3 & = & \varphi_t\Big\{2\left(16-\pf_t\right)\theta_t\chi_t
 + \varphi_t\left[
  \pf_t\left(\pf_t  - 16\right) v_t + \pf_t\theta_{2,t}^2  - 2\theta_t^4\varphi_{2,t}^2
 - 32\pf_t\theta_{1,t}
 + 8\varphi_t\sqrt{\pf_t}\theta_{2,t}^2
 - 8\theta_t^3\varphi_{2,t}^2 + 16\theta_{2,t}^2\right]
\Big\},\\
\Pp_2 & = &  2\theta_t\Big\{
(8\varphi_t\sqrt{\pf_t} - \pf_t-16)\varphi_t\chi_t
+4\theta_t\left(\varphi_t^6\theta_t v - 4\varphi_t^4 v + 4\varphi_t^3\theta_t\varphi_{1,t} - 3\varphi_t^2\theta_t\varphi_{2,t}^2 - 4\varphi_{2,t}^2\right)
\Big\},\\
\Pp_1 & = & 
-\theta_t^2\left(\pf_t+8\varphi_t\sqrt{\pf_t}+16\right)\left(\varphi_t^4 v_t - \varphi_{2,t}^2\right),\\
\Pp_0 & = & -16\pf_t\theta_t\left(\varphi_t^4 v - \varphi_{2,t}^2\right),\\
\end{array}
\end{equation*}
where we introduced $\pf_t:=\varphi_t^2\theta_t^2$,
and write~$\chi_t$ for the drift term of the covariation $\D\la\theta, \varphi\ra_t$.

Now, the dependence on~$K$ and the running spot~$S_t$ in the drift of~$\D\Ct$
is only through~$Z_t$, as both~$\theta_t$ and~$\varphi_t$ are independent thereof.
The derivation above is valid for any fixed~$K$, thus for all~$K>0$.
The only way the drift condition can be achieved is therefore that $\Pp(Z_t)$ is identically null, 
which holds if and only if $\Pp_i=0$ for $i=0,\ldots, 5$.
This system, with unknown $(\theta_{1,t}, \theta_{2,t}, \varphi_{1,t}, \varphi_{2,t}, \chi_t)$, 
is solvable as $\mathfrak{S}_+$ and~$\mathfrak{S}_-$, where
$$
\mathfrak{S}_{\pm} = 
\begin{pmatrix}
\displaystyle \frac{v_t}{16}\left(
\left[2\varrho^2-1\right]\varphi_t^2\theta_t^2
 + 8\varphi_t^2\theta_t\left[\varrho^2-1\right]
  - 2\varphi_t^2\theta_t\left[\theta_t+4 \right]\overline{\varrho}|\varrho|
   -16\right)\\
\displaystyle \pm \varphi_t\theta_t\sqrt{v_t}\left(|\varrho|\overline{\varrho}/\varrho - \varrho\right)\\
\displaystyle \frac{v_t\varphi_t}{16\theta_t}\left(
8\varphi_t^2\theta_t\left[1+\varrho^2\right]
 - \varrho^2\varphi_t^2\theta_t^2
 - 16\varrho^2
   + \left[16 - 8\varphi_t^2\theta_t + \varphi_t^2\theta_t^2\right]\overline{\varrho}|\varrho|
     + 32\right)\\
\mp\varphi_t^2\sqrt{v_t}\\
\varphi_t^3\theta_t v_t\left(\overline{\varrho}|\varrho| - \varrho^2\right)
\end{pmatrix},
$$
and $\overline{\varrho}:=\sqrt{1-\varrho^2}$.
It admits a solution only for 
$\varrho \in \{-1,1\}$, 
and~\eqref{eq:SolutionSystem} follows.
The two solutions for~$\theta_{2,t}$ and~$\varphi_{2,t}$ are hence of opposite sign, 
hence equivalent as Brownian increments are symmetric.

We now show that the absence of correlation between~$S$ and both~$\varphi$ and~$\theta$ is in fact necessary.
Denote by~$\D\Ct_{+}$ the expression above for~$\D\Ct$, and let $\D\Ct_{-}$ be the same one, 
except that we now set $Y_t:=-\sqrt{Z_t^2-1}$.
Since the smile is symmetric, then the difference $\D(\Ct_{+}-\Ct_{-})$ has to be null.
This yields the following system:
\begin{equation*}
\left\{
\begin{array}{rl}
\theta_{t}^{2} \varphi_{t}^{2}\D\la\varphi,k\ra_t & = 0,\\
\left(\theta_{t}^{2} \varphi_{t}^{2} + 8 \theta_{t} \varphi_{t}^{2} + 16\right)\theta_t \D\la\varphi,k\ra_t & = 0,\\
\varphi_{t} \Big(\D\la\theta,k\ra_t \theta_{t}^{2} \varphi_{t}^{2} - 8 \D\la\theta,k\ra_t \theta_{t} \varphi_{t}^{2} + 16 \D\la\theta,k\ra_t - 8 \D\la\varphi,k\ra_t \theta_{t}^{2} \varphi_{t}\Big) & = 0,\\
\Big(\D\la\theta,k\ra_t \varphi_{t} + \D\la\varphi,k\ra_t \theta_{t}\Big) \Big(\theta_{t}^{2} \varphi_{t}^{2} + 16\Big) & = 0,
\end{array}
\right.
\end{equation*}
and it is easy to see that the only valid solution is
$\D\la\theta, k\ra_t = \D\la\varphi, k\ra_t = 0$.
\end{proof}

\begin{proposition}\label{prop:CVTFlat}
If $\Qq_T\ne \emptyset$ and $(\theta, \varphi)$ solves~\eqref{eq:SolutionSystem},
then for any $t\in [0,T]$, 
$\theta_t = \int^T_t v_u \D u$, all the smiles (at~$t$, maturing at~$T$) are flat, 
and $\int^T_t v_u \D u \in \Ff_t$.
\end{proposition}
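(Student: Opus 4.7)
The plan is to reduce the two-dimensional system \eqref{eq:SolutionSystem} to a scalar conservation law via the product $p_t:=\theta_t\varphi_t$, then combine the terminal constraint $\theta_T=0$ with the static butterfly bound to force $\varphi\equiv 0$, and finally read off the dynamics of $\theta$.

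First I would apply It\^o's formula to $p_t$. From \eqref{eq:SolutionSystem} the covariation is $\D\la\theta,\varphi\ra_t = -p_t\varphi_t^2\,v_t\,\D t$; after a short algebraic cancellation the sum of drifts of $\varphi_t\,\D\theta_t$, $\theta_t\,\D\varphi_t$ and $\D\la\theta,\varphi\ra_t$ vanishes, and the two martingale contributions $-p_t\varphi_t\sqrt{v_t}\,\D B_t$ and $\theta_t\varphi_t^2\sqrt{v_t}\,\D B_t$ cancel because $\theta_t\varphi_t = p_t$. Hence $p_t$ is conserved, $p_t \equiv p_0 = \theta_0\varphi_0$ almost surely.

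Next I would combine with the no-arbitrage bound. Lemma~\ref{DynamicImpliesStatic} ensures that at each $t\in[0,T)$ the slice is butterfly-arbitrage-free, so Corollary~\ref{cor:ButterflyArb} gives $p_0^2 = (\theta_t\varphi_t)^2 \leq \B(\theta_t)$. Lemma~\ref{le:thetazero} yields $\theta_t\to 0$ almost surely as $t\uparrow T$, while \eqref{eq:BigB} together with the asymptotic $A(\theta)/\theta\to\cs^2$ quoted just after Corollary~\ref{cor:ButterflyArb} shows $\B(\theta_t)\to 0$ almost surely. Squeezing forces $p_0 = 0$; since $\theta_0>0$ this gives $\varphi_0 = 0$, and then $p_t = 0$ combined with $\theta_t>0$ on $[0,T)$ forces $\varphi_t\equiv 0$. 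Substituting in \eqref{eq:SSVIDyn} produces $\omega_t(k_t,\theta_t) = \theta_t$ independently of $k_t$, i.e.\ a flat smile. The main obstacle is precisely this squeeze: without the quantitative vanishing of $\B$ at the origin, the conservation law $p_t\equiv p_0$ could be reconciled with $\theta_T=0$ via an exploding $\varphi$, so global flatness genuinely uses the static no-arbitrage information.

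Finally, with $\varphi\equiv 0$, the first line of \eqref{eq:SolutionSystem} collapses to $\D\theta_t = -v_t\,\D t$, whence $\theta_t = \theta_0 - \int_0^t v_u\,\D u$ pathwise. Enforcing $\theta_T = 0$ almost surely gives $\theta_0 = \int_0^T v_u\,\D u$, so $\theta_t = \int_t^T v_u\,\D u$; the $\Ff_t$-measurability of the right-hand side follows from the adaptedness of $\theta$.
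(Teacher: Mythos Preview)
Your proof is correct and follows the same architecture as the paper's: first establish the conservation law $\theta_t\varphi_t\equiv\text{const}$, then force this constant to vanish, then read off the flat smile and the deterministic dynamics $\D\theta_t=-v_t\,\D t$. The paper also records $\D(\theta_t\varphi_t)=0$ (just after the proposition, in the opening of Section~3.3) and obtains the same integral representation and measurability conclusion.

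The one genuine point of difference is in how you kill the constant. The paper's own proof simply writes ``$\theta_T=0$ implies $\psi_T=0$'', implicitly using that $(\theta,\varphi)$ is assumed to solve~\eqref{eq:SolutionSystem} on $[0,T]$ and hence $\varphi_T$ is finite, so $\psi_T=\theta_T\varphi_T=0$. You instead invoke Lemma~\ref{DynamicImpliesStatic} and the squeeze $p_0^2\leq\B(\theta_t)\to 0$, which is precisely the mechanism of the paper's ``First proof'' of non-existence in Section~3.3. Your route is slightly longer but arguably more robust: it does not rely on knowing a priori that $\varphi$ stays finite at $T$ (a point you yourself flag), whereas the paper's shortcut leans on the standing hypothesis that the pair solves the system on the closed interval. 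Both are valid; yours just imports a tool the paper deploys only later.
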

\begin{proof}
The condition $\theta_T=0$ implies $\psi_T=0$. 
The SSVI smile is then trivial, equal to~$\theta_t$ for all strikes.
Furthermore $\D\theta_t=-v_t \D t$, and hence $\theta_t = \int^T_t v_u \D u$.
Since $\theta_t$ is $\Ff_t$ measurable, then so is $\int^T_t v_u \D u$.
\end{proof}
The Black-Scholes model with deterministic volatility clearly satisfies these conditions, 
and it is immediate to check from the definition that it is indeed a consistent total variance model.

\begin{remark}
If the property $\int^T_t v_u \D u \in \Ff_t$ is taken to hold for every $T>0$, then $v_T$ is $\Ff_t$-measurable, 
and~$v$ deterministic. 
Note that we only considered here a fixed maturity~$T$, not all of them.
One could then wonder whether the necessary properties above imply that~$v$ is deterministic, 
in which case the only solution would be a time-dependent Black-Scholes model.
We provide a non-trivial example showing that this is not always the case:
on the filtration of a planar Brownian motion $(W,B)$, fix $T=1$, $0<\eps_1, \eps_2<1$, 
and define
\begin{equation*}
v_t := 
\left\{
\begin{array}{ll}
v_0, & \text{for } \displaystyle t \in [0, 1/3),\\
\displaystyle v_0(1+\eps_1)\ind_{B_{1/3}>0} +  v_0(1-\eps_2)\ind_{B_{1/3} \leq 0},
 & \text{for } \displaystyle t\in [1/3, 2/3),\\
\displaystyle v_0(1-\eps_1)\ind_{B_{1/3}>0} +  v_0(1+\eps_2)\ind_{B_{1/3} \leq 0},
 & \text{for } \displaystyle t\in [2/3, 1).
 \end{array}
\right.
\end{equation*}
Then clearly $ \int^T_t v_u \D u$ is $\Ff_t$-measurable for every $t\in [0,1]$, 
and therefore the conditional law of~$S_T$ is lognormal, and the smiles are flat at every point in time. 
This in particular yields another example of a consistent total variance model, different from the time-dependent Black-Scholes.
\end{remark}

%%%%%%%%%%%%%%%%%%%%%%%%%%%%%%%%%%%%
%%%%%%%%%%%%%%%%%%%%%%%%%%%%%%%%%%%%
\subsection{Non-existence of non-trivial consistent dynamics}
We now give three different proofs that there is no non-trivial  consistent dynamics; 
they all rely on the core observation that~$(\tht)_{t\in[0,T]}$ should tend to zero at maturity as shown in Lemma~\ref{le:thetazero}.
Notice from~\eqref{eq:SolutionSystem} that $\D (\theta_t\varphi_t) = 0$,
so that $\theta_t \varphi_t = \psi_T$ for any $t\in [0,T]$ for some constant~$\psi_T$.
Hence, the SDE for~$\theta$ can be rewritten more compactly as
$$
\D \theta_t = \frac{(\psi_T-4)(\psi_T+4)}{16}v_t \D t - \psi_T\sqrt{v_t}\,\D B_t,
\qquad\text{with boundary condition }\theta_T=0.
$$

%%%%%%%%%%%%%%%%%%%%%%%%%%%%%%%%%%%%%%%%%%
\subsubsection{First proof}
With the necessary and sufficient no-Butterfly arbitrage condition $(\theta\varphi(\theta))^2 \leq \B(\theta)$,
given that $\theta_t\varphi(\theta_t)=\psi_T$, and that~$\B$ tends to zero when~$\theta$ does,
there cannot be a consistent dynamic;
otherwise there would be no static arbitrage at any point in time (Lemma~\ref{DynamicImpliesStatic}),
and since necessarily~$\theta$ tends to zero as~$t$ approaches~$T$ (Lemma~\ref{le:thetazero}),
 $\B(\theta_t) <(\theta_t\varphi(\theta_t))^2$ and there is a Butterfly arbitrage.

%%%%%%%%%%%%%%%%%%%%%%%%%%%%%%%%%%%%%%%%%%
\subsubsection{Second proof}\label{sec:Proof2}
We investigate whether the solution in Theorem~\ref{thm:thetaPhiDynamic} is trivial or not, without using the 
knowledge of the shape of the exact no-Butterfly arbitrage region.
 
\begin{lemma}\label{lem:Existence} 
Let $\alpha:=-\left(\frac{\psi_T}{16} - \frac{1}{\psi_T}\right)$. 
If $\EE\left[ \exp\left\{2 \alpha^2 \int_0^T v_t \D t\right\}\right]$ is finite 
(in particular, under the Novikov condition~\eqref{Novikov} if $\alpha<\frac{1}{2}$),
then any solution of~\eqref{eq:SolutionSystem} on $[0,T]$ satisfies
$$
\theta_t = \frac{\psi_T}{2\alpha}\log\EE_t\left[\exp\left(\frac{2\alpha \theta_T}{\psi_T}\right)\right],
\quad\text{almost surely for all } t \in [0, T].
$$
In particular, $\theta_T=0$ almost surely entails that  $\theta_t=0$ almost surely for all $t \in [0,T]$.
\end{lemma}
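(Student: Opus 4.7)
The plan is to recognise that the particular constant $\alpha$ has been chosen precisely so that the SDE for $\theta_t$ becomes the logarithm of a stochastic exponential. First I would rewrite the drift coefficient: since $\alpha = \frac{1}{\psi_T}-\frac{\psi_T}{16}$ we have $\alpha \psi_T = 1 - \psi_T^2/16 = -\frac{(\psi_T-4)(\psi_T+4)}{16}$, so the SDE in Theorem~\ref{thm:thetaPhiDynamic} (with $\psi_T=\theta_t\varphi_t$ constant) reads
\begin{equation*}
\D\theta_t = -\alpha\psi_T\, v_t\,\D t - \psi_T \sqrt{v_t}\,\D B_t.
\end{equation*}

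Next I would apply It\^o's formula to $X_t := \exp\!\left(\frac{2\alpha}{\psi_T}\theta_t\right)$. The quadratic variation is $\D\la\theta\ra_t=\psi_T^2 v_t\,\D t$, so the drift of $\D X_t/X_t$ equals $\frac{2\alpha}{\psi_T}(-\alpha\psi_T v_t)+\frac{1}{2}\cdot\frac{4\alpha^2}{\psi_T^2}\cdot\psi_T^2 v_t = -2\alpha^2 v_t+2\alpha^2 v_t=0$. Hence the choice of $\alpha$ kills the drift and
\begin{equation*}
\D X_t = -2\alpha\, X_t \sqrt{v_t}\,\D B_t,
\end{equation*}
so $X$ is the Dol\'eans--Dade exponential of the local martingale $-2\alpha\int_0^{\cdot}\sqrt{v_s}\,\D B_s$, in particular a $\PP$-local martingale.

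The main (and only substantial) obstacle is promoting this local martingale to a true martingale on $[0,T]$. Here the hypothesis $\EE\!\left[\exp\!\left\{2\alpha^2\int_0^T v_t\,\D t\right\}\right]<\infty$ is exactly Novikov's criterion for the exponential $X$, since the integrand squared equals $4\alpha^2 v_t$ and half of this is $2\alpha^2 v_t$; hence $X$ is a true $\PP$-martingale. When $|\alpha|<\tfrac{1}{2}$ one has $2\alpha^2<\tfrac{1}{2}$, and the pointwise bound $\exp(2\alpha^2\int_0^T v_t\,\D t)\le\exp(\tfrac12\int_0^T v_t\,\D t)$ shows that condition~\eqref{Novikov} suffices.

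From the martingale property $\EE_t[X_T]=X_t$ I would read off
\begin{equation*}
\exp\!\left(\tfrac{2\alpha}{\psi_T}\theta_t\right)=\EE_t\!\left[\exp\!\left(\tfrac{2\alpha}{\psi_T}\theta_T\right)\right],
\end{equation*}
and taking logarithms and multiplying by $\psi_T/(2\alpha)$ yields the stated identity. Finally, if $\theta_T=0$ almost surely then the right-hand side equals $1$, so $X_t\equiv 1$, and strict monotonicity of $x\mapsto\exp(2\alpha x/\psi_T)$ gives $\theta_t=0$ almost surely for every $t\in[0,T]$, proving the last assertion.
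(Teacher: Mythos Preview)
Your proof is correct and follows essentially the same approach as the paper: both arguments show that $\exp\!\left(\frac{2\alpha}{\psi_T}\theta_t\right)$ is a local martingale (the paper via the substitution $\widetilde{\theta}_t=-\theta_t/\psi_T$ and It\^o's formula, you by directly computing the drift), then invoke the Novikov-type hypothesis to upgrade to a true martingale and read off the conditional-expectation identity. Your presentation via the Dol\'eans--Dade exponential is arguably a bit cleaner, but the mathematical content is the same.
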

\begin{proof}
Setting $\widetilde{\theta}_t := -\frac{\theta_t}{\psi_T}$ yields
$\D \widetilde{\theta}_t = \alpha v_t \D t + \sqrt{v}_t \D B_t$,
with $\widetilde{\theta}_T = 0$ and $\alpha:= \frac{1}{\psi_T} -\frac{\psi_T}{16}$.
It\^o's formula implies
\begin{align*}
\E^{-2\alpha \widetilde{\theta}_T}
 & = \E^{-2\alpha \widetilde{\theta}_t} -2\alpha\int_{t}^{T}\E^{-2\alpha \theta_u}\D \widetilde{\theta}_u
 + 2\alpha^2\int_{t}^{T}\E^{-2\alpha \widetilde{\theta}_u}v_u \D u\\
 & = \E^{-2\alpha \widetilde{\theta}_t} -2\alpha\int_{t}^{T}\E^{-2\alpha \widetilde{\theta}_u}\left[\alpha v_u \D u + \sqrt{v}_u \D B_u\right]
  + 2\alpha^2\int_{t}^{T}\E^{-2\alpha \widetilde{\theta}_u}v_u \D u\\
 & =  \E^{-2\alpha \widetilde{\theta}_t} - 2\alpha\int_{t}^{T}\E^{-2\alpha \widetilde{\theta}_u}\sqrt{v}_u \D B_u.
\end{align*}
Note that $-2\alpha \widetilde{\theta}_u>0$ and $\E^{-2\alpha \widetilde{\theta}_u}=\E^{-2\alpha \widetilde{\theta}_0}\E^{-2\alpha^2 \int_0^u v_t \D t} \E^{-2\alpha \int_0^u \sqrt{v}_t \D B_t}$;
the Novikov condition applied to the process $-2\alpha \int_0^{\cdot} \sqrt{v}_t \D B_t$
reads $\EE\left[\exp\left\{2 \alpha^2 \int_0^T v_t \D t\right\}\right]<\infty$, 
in which case the stochastic integral is square integrable with null expectation, 
and every term in the expression has finite expectation. Taking expectations conditional on~$\Ff_t$ on both sides yields
$\EE_t\left[\E^{-2\alpha \widetilde{\theta}_T}\right] = \E^{-2\alpha \widetilde{\theta}_t}$
almost surely for all $t \in [0, T]$,
hence 
$$
\theta_t = \frac{\psi_T}{2\alpha}\log\EE_t\left[\exp\left(\frac{2\alpha \theta_T}{\psi_T}\right)\right],
\quad\text{almost surely for all } t \in [0, T].
$$
Imposing $\theta_T=0$ almost surely implies that $\theta_t = 0$ almost surely for all $t\in [0,T]$.
\end{proof}

\subsubsection{Third proof}\label{sec:Proof3}
Because of the terminal condition $\theta_T=0$, the stochastic differential equation satisfied by~$\theta$
is actually a classical example of a BSDE. 
Consider indeed the process
\begin{equation}\label{eq:BSDE}
X_t = \xi + \int_{t}^{T}f(s,Z_s)\D s - \int_{t}^{T}Z_s\D B_s,
\end{equation}
on $[0, T]$, with terminal condition $X_T = \xi$, 
where~$\xi$ is a bounded, $\Ff^B_T$-measurable random variable, 
and $f:[0,T]\times\RR\to \RR$ has at most quadratic growth.
This is the classical (one-dimensional) example of a quadratic BSDE~\cite[Chapter 10]{TouziBook}.
In our setting, integrating the SDE for~$\widetilde{\theta}$ on $[t,T]$ reads
$$
\widetilde{\theta}_t = \widetilde{\theta}_T
+ \int_{t}^{T}\left(\frac{\psi_T}{16} - \frac{1}{\psi_T}\right)v_s \D s - \int_{t}^{T}\sqrt{v}_s \D B_s,
$$
which is exactly of the form~\eqref{eq:BSDE} with 
$X = \widetilde{\theta}$,
$f(\cdot, z) \equiv (\frac{\psi_T}{16} - \frac{1}{\psi_T})z^2$, $Z = \sqrt{v}$ and $\xi = 0$ almost surely.
From~\cite[Chapter 10]{TouziBook}, if $(X,Z)$ is a solution to~\eqref{eq:BSDE} with~$X$ bounded, then $Z \in \HH^2_{\BMO}$, 
where 
$\HH^2_{\BMO} := \left\{\varphi \in \HH^2: \left\| \int_{0}^{\cdot}\varphi_s \D B_s\right\|_{\BMO}<\infty\right\}$,
with~$\HH^2$ the space of square integrable martingales, and
$\displaystyle 
\|M\|_{\BMO} := \sup\{\left\|\EE\left[\langle M\rangle_T - \langle M\rangle_{\tau}\vert\Ff_\tau\right]\right\|_{\infty}: \tau \in \Tt_{0}^{T}\},
$
where $\Tt_0^T$ denotes the set of stopping times in $[0,T]$.
Since the driver~$f$ is quadratic and smooth, uniqueness of a solution to~\eqref{eq:BSDE}
is guaranteed by~\cite[Theorem 10.5]{TouziBook}.
Because the terminal condition~$\xi$ is bounded here,
existence of a unique solution to~\eqref{eq:BSDE} is then given by~\cite[Theorem 10.6]{TouziBook},
with upper bound estimate for the (appropriate) norms of~$X$ and~$Z$. This grants that $(X, Z)=(0,0)$
is the unique solution.

\begin{remark}
Section~\ref{sec:Proof2} uses computations common in BSDE theory, 
and Section~\ref{sec:Proof3} is thus a shortcut for BSDE wizards;
so those proofs are essentially one and the same.
\end{remark}

%%%%%%%%%%%%%%%%%%%%%%%%%%%%%%%%%%%%%%%
%%%%%%%%%%%%%%%%%%%%%%%%%%%%%%%%%%%%%%%
\section{Smile bubbles}\label{sec:SmileBubbles}
We showed previously that, except for dynamics close to Black-Scholes, 
there is not dynamic model for~$(\theta_t)_{t\in[0,T]}$ and~$(\varphi_t)_{t\in[0,T]}$ 
of symmetric SSVI up to maturity~$T$. 
We now investigate what happens if we restrict ourselves to a shorter time horizon;
more precisely, we wish to find valid dynamics for~$\theta$ up to some (possibly stochastic) time horizon $\ts\in (0,T)$.  
This would constitute a dynamic arbitrage-free model on $[0, \ts]$, 
satisfying the martingale condition for the stock and Vanilla option price, 
with total variance given by SSVI. 
One caveat is that absence of Butterfly arbitrage is not granted, 
because the Vanilla prices can a priori no longer be written as conditional expectations of the terminal payoff, 
although the individual option prices are martingales.  
Now, in the case of symmetric SSVI, we have explicit necessary and sufficient
conditions on the parameters preventing Butterfly arbitrage: 
so starting in this domain, we will have locally no dynamic nor Butterfly arbitrages
up to the exit time~$\tau$ of this domain.
In order to state the definitions below, let~$\Tt_T$ denote the set of all stopping times with values in $(0,T]$.
The following is a localised version of Definition~\ref{def:Consistent}:
\begin{definition}\label{def:LocallyCTV}
A locally consistent total variance model is a couple $(S_{\cdot}, \omega_{\cdot}(k_{\cdot}, T))$ such that
there exists $\tau\in\Tt_T$ for which, up to~$\tau$:
\begin{enumerate}[(i)]
\item the process $S$ is a strictly positive $\QQ$-martingale;
\item for every $K>0$, the process $\omega_{\cdot}(k_{\cdot},T)$ has continuous paths and is 
strictly positive;
\item there is no Butterfly arbitrage.
\item for every $K>0$, the process~$C$ defined by 
$C_t :=S_t\BS\left(k_t, \sqrt{\omega_t(k_t,T)}\right)$ is a $\QQ$-martingale.
\end{enumerate}

We denote~$\Qql_T$ the set of all locally consistent total variance models,
and we say that there is no dynamic arbitrage on $[0, \ts]$ if~$\Qql_T$ is not empty with $\tau \in \Qql_T$.
\end{definition}

\begin{remark}
Here again we assume that the implied volatility of Calls and Puts is the same, 
so that the Put-Call-Parity holds by assumption, 
and we can equivalently replace Definition~\ref{def:LocallyCTV}(iv) 
by the martingale property of the Put price process, 
and Definition~\ref{def:LocallyCTV}(iii) by the corresponding properties of the Put price.
Also the martingale property of the option prices is equivalent here again to the local martingale property.
\end{remark}

Obviously, $\Qq_T \subset \Qql_T$, so that every consistent total variance model can be localised. 
Conversely, a local consistent total variance model might or might not be extended to a fully consistent one. Even if it might be,
this possibility could well be of no use in practice because of the complexity of the overall model. 
We believe therefore that this local approach has a strong interest on its own, 
and we shall also call such a locally consistent model a {\it bubble}.

If one trades only within a bubble lifespan~$[0, \tau]$,
with possibly additionally unwinding trades at the expiry~$T$, there is no arbitrage to be made. 
Arbitrage, if some, will follow from purely dynamic strategies involving unwinding positions beyond the bubble lifespan. 
We believe this set-up, albeit not surprising from a strict mathematical point of view, might be very relevant to account for real life joint underlying and options
dynamics. Indeed, if one goes back to the local dynamic of an SSVI bubble 
in Theorem~\ref{thm:thetaPhiDynamic}, we stress that the driving
Brownian motion of the SSVI parameters should be independent of the driving Brownian motion of~$S$, with no necessary relation to the Brownian motion (if any) driving
the instantaneous volatility process~$v$. 
This might correspond to the observations of the high frequency joint dynamics of the instantaneous volatility and of option prices
on Equity index options in~\cite{abergel2012drives}, where the complementary noise driving the non-Delta move of the option prices does not correspond to the idiosyncratic noise of the instantaneous volatility.
In fact, beyond this high-frequency situation, it might be the case that the joint dynamic of the underlying and of the Vanilla option prices is a succession of bubbles, with some bubbles
far from a fully consistent joint dynamic, and others closer.

%%%%%%%%%%%%%%%%%%%%%%%%%%%%%%%%%%%%%%%
\subsection{Symmetric SSVI smile bubbles}
In the symmetric SSVI dynamics, we proved that there cannot exist any consistent total variance model, by identifying local dynamics that $(\theta, \varphi)$ should satisfy for option
prices to be local martingales. 
The missing ingredient to design a bubble is the no-Butterfly condition; 
in the symmetric SSVI case, we have an explicit description of the no-Butterfly domain in terms of the parameters. 
So assuming the model parameters start within this domain, 
and making them evolve according to the local dynamics identified in our computations, 
we indeed obtain a bubble. 
The bubble lifespan is then the first time when either~$\theta$ or~$\varphi$ becomes negative, 
or when we exit the no-Butterfly domain. 
The explicit description of the no-Butterfly arbitrage region is given by Condition~\ref{cor:ButterflyArb}, 
so that combining it with~\eqref{eq:SolutionSystem}, 
$\psi_T^2 \leq \B(\theta_t)$ holds for all $t\in [0,T]$.
The function~$\B$ is smooth, strictly increasing from~$[0,4]$ to $[0, 16]$, and constant on~$[4, \infty)$,
so its inverse~$\Binv$ is well defined from~$[0,16]$ to~$[0,4]$,
and $\theta_0 \geq \Binv(\psi_T^2)$, implying that~$\theta$ is non-negative.
Therefore, if 
$\psi_T<4$,
then $\theta_0> \Binv(\psi_T^2)$, and we consider the stopping time 
\begin{equation}\label{eq:Tau}
\ts := \inf \left\{ t \leq T, \theta_t < \Binv(\psi_T^2)\right\} \in \Tt_{T}.
\end{equation}
On $[0, \ts]$ both Vanilla option prices and stock price are martingales, 
and the no-Butterfly condition holds, so that we have an explicit description of the symmetric SSVI implied volatility bubbles.
These however depend on the parameter~$\psi_T$ and on the dynamics of the process~$v$, 
which partially drives~$\theta$. Indeed, from ~\eqref{eq:SolutionSystem},
$$
\theta_t
 = \theta_0 + \left(\frac{\psi_T^2}{16} - 1\right) \int^t_0 v_u \D u + \psi_T \int^t_0 \sqrt{v_u}\,\D B^\theta_u,
$$
where~$B^\theta$ is a Brownian motion independent from~$B^S$, 
which may, or may not, be related to the Brownian motion driving~$v$.
We need to check that~$\theta$ remains positive, but assuming that~$\theta_0$ satisfies the no-Butterfly constraint in Condition~\ref{cor:ButterflyArb}, the positivity
of~$\theta$ before time~$\ts$ follows from its definition. 
Therefore any locally integrable instantaneous process independent from~$B^S$ defines a symmetric SSVI implied volatility bubble up to the stopping time~$\ts$. 
Note in passing that the Novikov condition on~$v$ grants the martingale property of~$S$, 
but could certainly be weakened in this bubble context.

Due to the finite lifespan of the bubble, shorter than the options' maturity, 
option prices are not given by the expectation of the terminal payoff 
under the risk-neutral dynamic of the underlying. 
It is natural in the context of a bubble to then take the conditional expectation at the bubble time boundary. 
Since option prices are true martingales, Doob's optional stopping theorem~\cite[Chapter II, Section 3]{Revuz},
implies that the price of the option at time~$t\leq \ts$ is
$\EE_t\left[S_\tau\BS(k_\tau, \varpi_\tau) \right]$,
where $S_. = K \exp{(-k_.)}$ and
$$
\varpi_t := \frac{1}{2}\left(\theta_{t}+\sqrt{\theta_{t}^2+\psi_T^2
k_{t}^2}\right).
$$
For $t\in [0,T]$, $\varpi_t$ depends on the terminal expiry~$T$ through~$\psi_t$.
Furthermore, the option price is also given by the Black-Scholes formula composed with SSVI, so that
(recall that $S_0=1$, hence $k_0=\log{K}$)
\begin{equation}\label{eq:BubbleEquation}
S_0\BS(k_0, \varpi_0) = \EE_0\left[S_\tau\BS(k_\tau, \varpi_\tau)\right].
\end{equation}
This equality holds irrespective of the dynamics of the variance process, even if the latter directly impacts 
the joint law of $(\tau, k_\tau, \theta_{\ts})$.  
Since the strike can be chosen arbitrarily, this in turn yields 
an interesting property of this joint law, of which we provide several examples below.
We christen~\eqref{eq:BubbleEquation} the {\it Bubble Master Equation}.
It will be convenient in some situations to go back to a Brownian setting by time-change: let us restart from
$$
\BS(k_0, \varpi_0) 
 = \EE_0\left[\ind_{\{\tau<T\}} \BS(k_\tau, \varpi_\tau)
 +  \ind_{\{\tau\geq T\}} \BS(k_T, \varpi_T)\right].
$$
By time change, there exist two Brownian motions $\beta^\theta, \beta^S$,
independent conditionally on~$(v_t)_{t\geq 0}$, such that
\begin{equation}\label{eq:DynQ}
\theta_t = \theta_0 + \left(\frac{\psi_T^2}{16}-1\right)t + \psi_T \beta^\theta_t,\qquad
S_t = S_0 \exp\left\{\beta^S_t - \frac{t}{2}\right\},\quad\text{and}\quad
\tau = \inf \left\{ r< \Vv_T, \theta_r<\Binv(\psi_T^2) \right\},
\end{equation}
where $\Vv_T:=\int_{0}^{T}v_s \D s$.
With the probability measure~$\QQ$ via $\frac{\D\QQ}{\D\PP_T} := \mathcal{E}(L_T)$,
where 
$L_T := \eta \beta^\theta_T$
and $\eta = -\frac{\psi_T}{16} + \frac{1}{\psi_T}>0$,
the process~$\widetilde{\beta}$ defined by 
$\widetilde{\beta}_t= -\beta^\theta_t + \eta t$ is a $\QQ$-Brownian motion by Girsanov's Theorem.
Furthermore, $\tau$ is now the first hitting time of~$\widetilde{\beta}$ of 
$a:=\frac{1}{\psi_T}\left(\theta_0-\Binv(\psi_T^2)\right)$, and hence, by conditioning,
\begin{equation}\label{eq:BSPrice}
\BS(k_0, \varpi_0)
= \EE_0^{\QQ}
 \left[\exp\left\{\eta \widetilde{\beta}_\tau -\frac{\eta^2 \tau}{2} \right\} \ind_{\{\tau<\Vv_T\}}
 \BS(k_\tau, \varpi_\tau)\\
 + \exp\left\{\eta \widetilde{\beta}_{\Vv_T} -\frac{\eta^2 \Vv_T}{2}\right\} \ind_{\{\tau\geq \Vv_T\}}
 \BS(k_{\Vv_T}, \varpi_{\Vv_T})\right].
\end{equation}

%%%%%%%%%%%%%%%%%%%%%%%%%%%%%%%%%%%%%
\subsubsection{The Black-Scholes-SSVI bubble}
We investigate the Bubble master equation when $T$ goes to infinity, choosing $\psi_T$ such that 
it is constantly equal to a given $\psi_\infty<4$. Assume $v_t = v_0>0$ almost surely for all $t\in [0,T]$;
then $\Vv_T = v T$ diverges to infinity as~$T$ increases.
In this limiting case, we deduce from~\eqref{eq:BSPrice}:
\begin{proposition}
For any $\psi_\infty<4, \theta_0>\Binv(\psi_\infty^2)$, $k_0\in\RR$, with $\bh:=\Binv(\psi_\infty^2)$, the equality
$$
 \BS(k_0, \varpi_0)
  = \frac{a \E^{\eta a}}{2 \pi} \int_{\RR}\E^{-\frac{y}{2}}  
\BS\left(k-y, \frac{1}{2}\left(\bh + \sqrt{\psi_\infty^2 (k -y)^2 + \bh^2}\right)\right)
\sqrt{\frac{4 \eta^2+1}{y^2+a^2}}
\K_{1}\left(\frac{\sqrt{\left(4 \eta^2+1\right) \left(y^2+a^2\right)}}{2}\right)\D y
$$
holds, where $\eta := \frac{1}{\psi_\infty} - \frac{\psi_\infty}{16}$,
$a:=\frac{\theta_0-\bh}{\psi_\infty} >0$, and 
$\K_{1}$ denotes the modified Bessel function of the second kind.
\end{proposition}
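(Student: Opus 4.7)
My plan is to pass to the limit $T\to\infty$ in equation~\eqref{eq:BSPrice}, then reduce the surviving expectation to an explicit one-dimensional integral via a change of variables and the classical Bessel integral representation. With $v\equiv v_0$ and $\psi_T\equiv\psi_\infty<4$, $\Vv_T=v_0 T\to\infty$, while under $\QQ$ the time $\tau$ is the first hitting time of level $a>0$ by the standard Brownian motion $\widetilde{\beta}$ and is therefore $\QQ$-almost surely finite. To discard the terminal-horizon term in~\eqref{eq:BSPrice}, I would bound its integrand in modulus by $\mathcal{E}(\eta\widetilde{\beta})_{\Vv_T}\ind_{\{\tau\geq\Vv_T\}}$ (using $\BS\leq 1$) and observe that $\mathcal{E}(\eta\widetilde{\beta})$ is precisely the density $\D\PP/\D\QQ$ on $\Ff_{\Vv_T}$, so that its $\QQ$-expectation on $\{\tau\geq\Vv_T\}$ equals $\PP(\tau\geq\Vv_T)$. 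Under $\PP$, $\widetilde{\beta}_t=-\beta^\theta_t+\eta t$ is a Brownian motion with strictly positive drift $\eta$ (because $\psi_\infty<4$), hence $\tau<\infty$ $\PP$-a.s., and dominated convergence eliminates the second term.

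On the surviving term, continuity forces $\widetilde{\beta}_\tau=a$, so the exponential factor collapses to $\E^{\eta a-\eta^2\tau/2}$, and $\theta_\tau=\bh$ yields
$$\varpi_\tau=\tfrac{1}{2}\bigl(\bh+\sqrt{\bh^2+\psi_\infty^2 k_\tau^2}\bigr),\qquad k_\tau=k_0-\beta^S_\tau+\tfrac{\tau}{2},$$
the latter read off from~\eqref{eq:DynQ}. Because the Girsanov density depends only on $\beta^\theta$, $\beta^S$ remains a $\QQ$-Brownian motion independent of $\widetilde{\beta}$, and hence of $\tau$. Conditioning on $\tau$ reduces the expectation to a double integral against the Lévy hitting-time density $f_\tau(t)=a(2\pi t^3)^{-1/2}\E^{-a^2/(2t)}$ and the conditional Gaussian law $\beta^S_\tau\mid\{\tau=t\}\sim\Nn(0,t)$, producing a joint density $\frac{a}{2\pi t^2}\E^{-(a^2+w^2)/(2t)}$ for $(\tau,\beta^S_\tau)$ under $\QQ$.

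The final step is the substitution $y=w-t/2$ (with $w=\beta^S_\tau$): it replaces $k_\tau$ by $k_0-y$ and, after expanding $(y+t/2)^2$, factorises the combined exponent as
$$-\tfrac{y}{2}-\tfrac{a^2+y^2}{2t}-\tfrac{(4\eta^2+1)\,t}{8}.$$
Fubini then exchanges the order of integration, and the inner $t$-integral is evaluated via the classical identity $\int_0^\infty t^{-2}\E^{-A/t-Bt}\,\D t=2\sqrt{B/A}\,\K_{1}(2\sqrt{AB})$ applied with $A=(a^2+y^2)/2$ and $B=(4\eta^2+1)/8$; the prefactor $\sqrt{B/A}$ and the Bessel argument $2\sqrt{AB}$ simplify directly into the factors $\sqrt{(4\eta^2+1)/(y^2+a^2)}$ and $\tfrac{1}{2}\sqrt{(4\eta^2+1)(y^2+a^2)}$ appearing in the stated formula, and the remaining $k_0$ is identified with $k$.

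I expect the main obstacle to be the uniform control needed to drop the terminal term as $T\to\infty$; identifying $\mathcal{E}(\eta\widetilde{\beta})$ as the $\PP/\QQ$ density makes that argument rather clean, but one has to be attentive to the fact that $\widetilde{\beta}$ is a Brownian motion only under $\QQ$, not under $\PP$. After that, the residual work is a careful change of variables followed by a textbook Bessel integral evaluation.
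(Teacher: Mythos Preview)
Your proposal is correct and follows essentially the same route as the paper: take $T\to\infty$ in~\eqref{eq:BSPrice}, eliminate the terminal-horizon term, use $\widetilde{\beta}_\tau=a$ and the independence of $\beta^S$ from $\tau$ to reduce to a double integral against the L\'evy hitting-time density and a Gaussian, then apply the substitution $y=x\sqrt{t}-t/2$ and the Bessel integral identity. The only cosmetic difference is in killing the terminal term: the paper bounds it pointwise by $\E^{\eta a}$ on $\{\tau\geq\Vv_T\}$ (since $\widetilde{\beta}_{\Vv_T}<a$ there) and applies dominated convergence, whereas you recognise $\mathcal{E}(\eta\widetilde{\beta})$ as $\D\PP/\D\QQ$ and reduce to $\PP(\tau\geq\Vv_T)\to 0$; both arguments are valid and equivalent in spirit.
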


\begin{proof}
Since $a, \eta>0$, then $\eta \widetilde{\beta}_T<\eta a$, 
so that the second term in~\eqref{eq:BSPrice} tends to zero pointwise as~$T$ increases, 
while remaining bounded above by $K\E^{\eta a}$.
Since~$\tau$ is finite almost surely, the first term in~\eqref{eq:BSPrice} increases to 
$$
\exp\left\{\eta \widetilde{\beta}_\tau -\frac{\eta^2 \tau}{2}\right\} 
\BS(k_\tau, \varpi_\tau)\ind_{\{\tau<T\}},
$$ 
so, dominated convergence for the second term and monotone convergence for the first one imply
$$
\BS(k_0, \varpi_0)
  = \EE_0^{\QQ}\left[\exp\left\{\eta \widetilde{\beta}_\tau -\frac{\eta^2 \tau}{2}\right\} 
  \BS(k_\tau, \varpi_\tau)\right].
$$
Under~$\QQ$, using~\eqref{eq:DynQ}, we have $k_t = \log\left(\frac{K}{S_t}\right) = k-\beta^S_t+\frac{1}{2}$, so that
$$
\BS(k_0, \varpi_0)
 = \EE_0^{\QQ}\left[\E^{\eta a -\frac{1}{2} \eta^2 \tau} 
 \BS\left(k - \beta^S_\tau + \frac{\tau}{2}, \frac{1}{2}\left\{\bh + \sqrt{\psi_\infty^2 \left(k -\beta^S_\tau+\frac{\tau}{2}\right)^2 + \bh^2}\right\}\right)\right].
 $$
Under~$\QQ$, the density of~$\tau$ is
$p(t):=\frac{a}{\sqrt{2 \pi t^3}} \E^{-\frac{a^2}{2t}}$. 
Since~$\tau$ and~$\beta^S$ are independent, the right-hand side reads 
\begin{align*}
& \E^{\eta a} \int_{0}^{\infty} 
\left[\int_{\RR} \E^{ -\frac{1}{2} \eta^2 t} 
\BS\left(k - x\sqrt{t}+\frac{t}{2}, \frac{1}{2}\left\{\bh + \sqrt{\psi_\infty^2 \left(k -x\sqrt{t}+\frac{t}{2} \right)^2 + \bh^2}\right\}\right) p(t) \frac{\E^{-\frac{x^2}{2}}}{\sqrt{2 \pi}} \D x\right] \D t\\
 & = a \frac{\E^{\eta a}}{2 \pi} \int_{\RR} \E^{-\frac{y}{2}}  
\BS\left(k-y, \frac{1}{2}\left(\bh + \sqrt{\psi_\infty^2 (k -y)^2 + \bh^2}\right)\right) 
 \left(\int_{0}^\infty \E^{-\frac{1}{2}\left(\eta^2+\frac{1}{4}\right)  t} \E^{ -\frac{y^2+a^2}{2t}} \frac{\D t}{t^2}\right)\D y,
\end{align*}
with $y := x\sqrt{t}-\frac{1}{2} t$.
The proposition follows from the computation of the inner integral as
$$
\int_{0}^\infty \E^{-\frac{1}{2}\left(\eta^2+\frac{1}{4}\right)  t} \E^{ -\frac{y^2+a^2}{2t}} \frac{\D t}{t^2}
= \sqrt{\frac{4 \eta^2+1}{y^2+a^2}}
\K_{1}\left(\frac{\sqrt{\left(4 \eta^2+1\right) \left(y^2+a^2\right)}}{2}\right).
$$ 
 \end{proof} 
 
\begin{remark}
We can in fact apply exactly the same reasoning for any positive level~$\beta<\theta_0$ since
 we only use the martingale property of the price under the dynamic of~$\theta$: 
the same identity thus holds for any $0<\beta<\theta_0$, 
and not only for $\beta=\Binv(\psi_\infty^2)$. Observe that the right-hand side does not depend on~$\bh$, 
since the left-hand side does not. Interestingly enough, this equality does not seem easy to obtain by classical means.
 \end{remark}
 
We provide a numerical example to illustrate this Black-Scholes bubble,
starting from a symmetric smile given by~\eqref{eq:SSVIDyn}
with maturity $T=1$ and parameters $\theta(T) = \frac{1}{2}$ and $\varphi(T) = 1$, 
shown in Figure~\ref{fig:SSVISmile}.
We consider five Black-Scholes paths with $\sigma = 50\%$ 
over~$[0,T]$, with~$500$ equidistant time steps, plotted in Figure~\ref{fig:ThetaDynamics}, 
where we indicate the level $\Binv(\psi_T^2)= 0.013$. 
Once a path crosses this level, we show it dashed.
For each path, we further plot the corresponding total variance smiles in Figure~\ref{fig:ThetaDynamicsSmiles}.
The paths of~$\theta$ will either escape the no-Butterfly arbitrage domain 
(crossing~$\Binv(\psi_T^2)$) before~$T$, or reach the latter 
at a positive value, which is inconsistent in both cases. 
Yet, strictly up to this stopping time, the joint dynamics of the underlying and of the smile is  arbitrage-free, 
both from a static point of view and from a dynamic point of view, as there is no way to synthetise an arbitrage.

\begin{figure}[h!]
\includegraphics[scale=0.4]{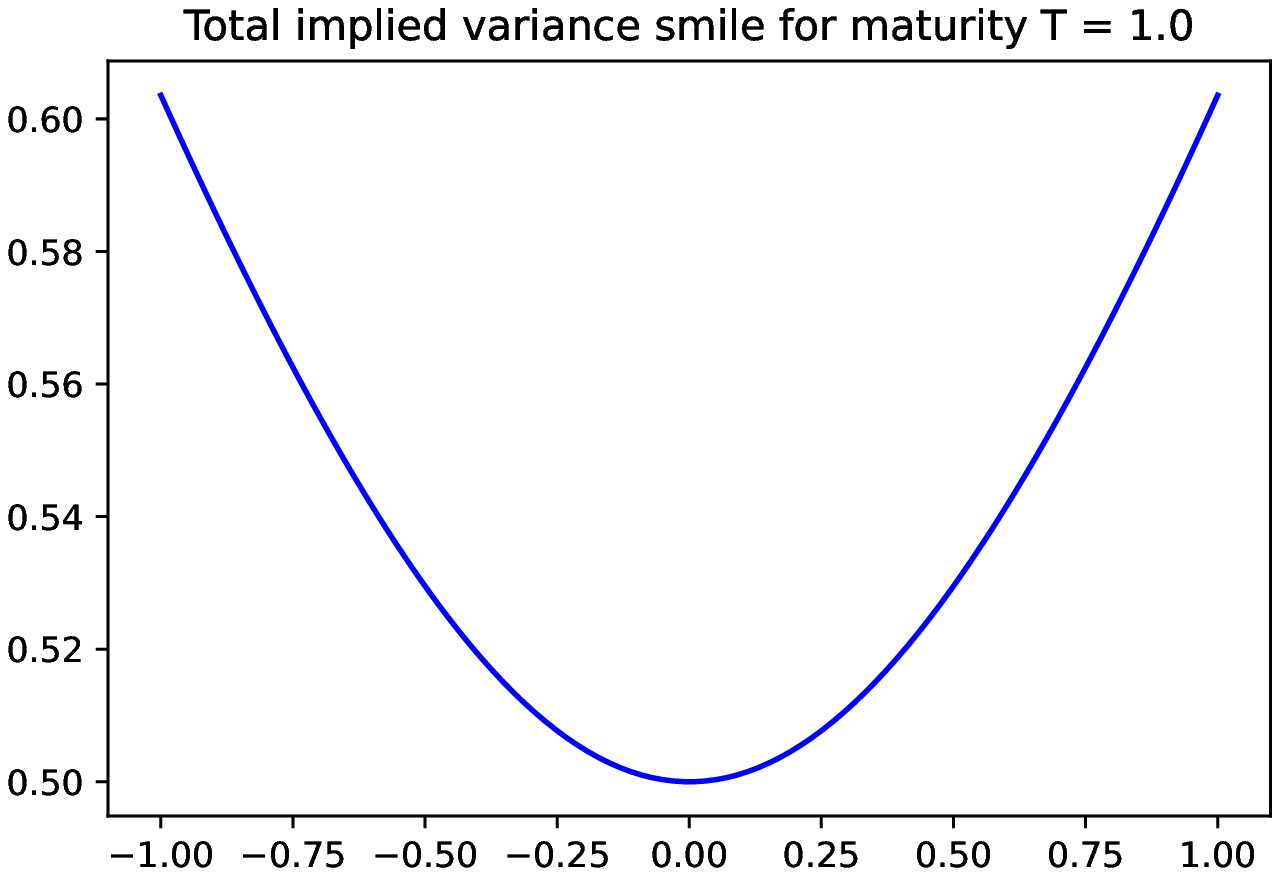}
\caption{Total implied variance smile with maturity $T=1$ year.}
\label{fig:SSVISmile}
\includegraphics[scale=0.4]{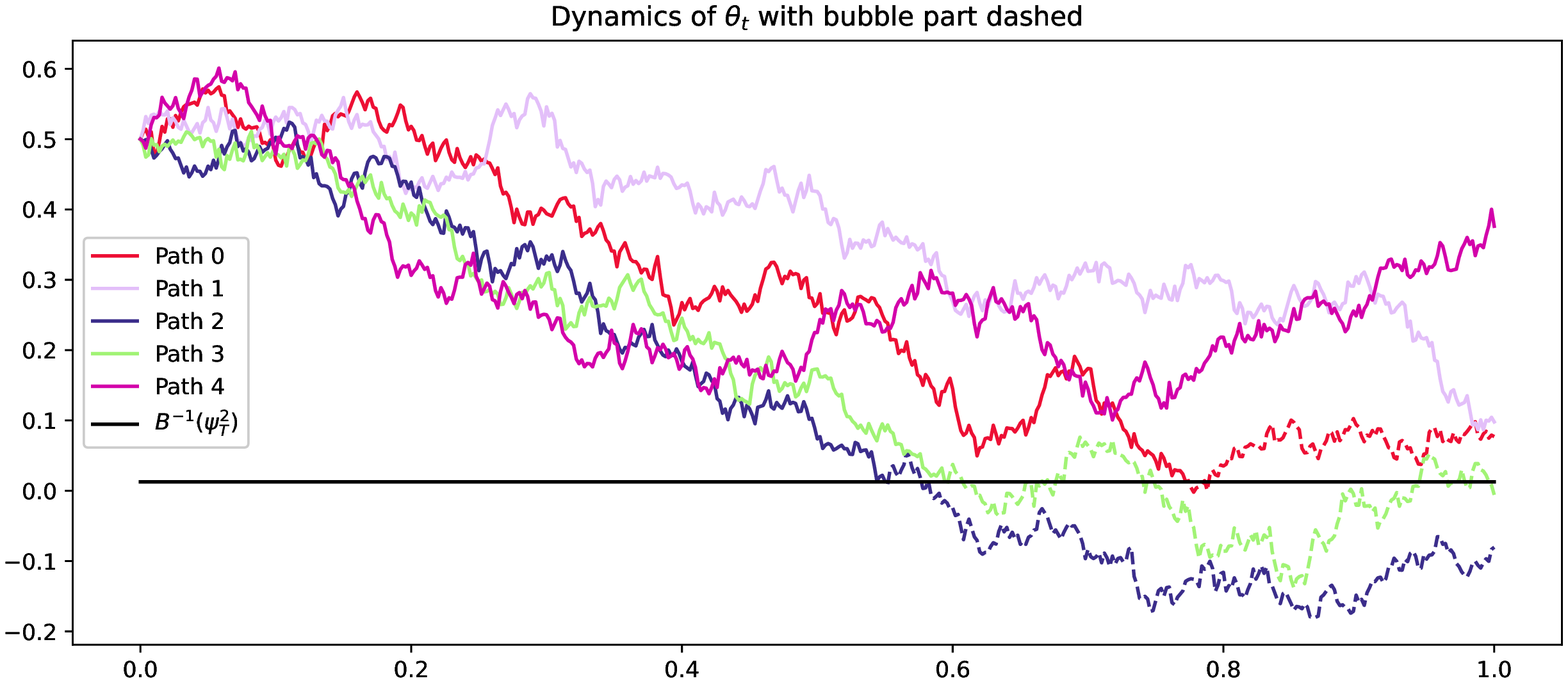}
\caption{Dynamics of the paths for $(\theta_t)_{t\in [0,1]}$.}
\label{fig:ThetaDynamics}
\end{figure}

\begin{figure}[h!]
\includegraphics[scale=0.4]{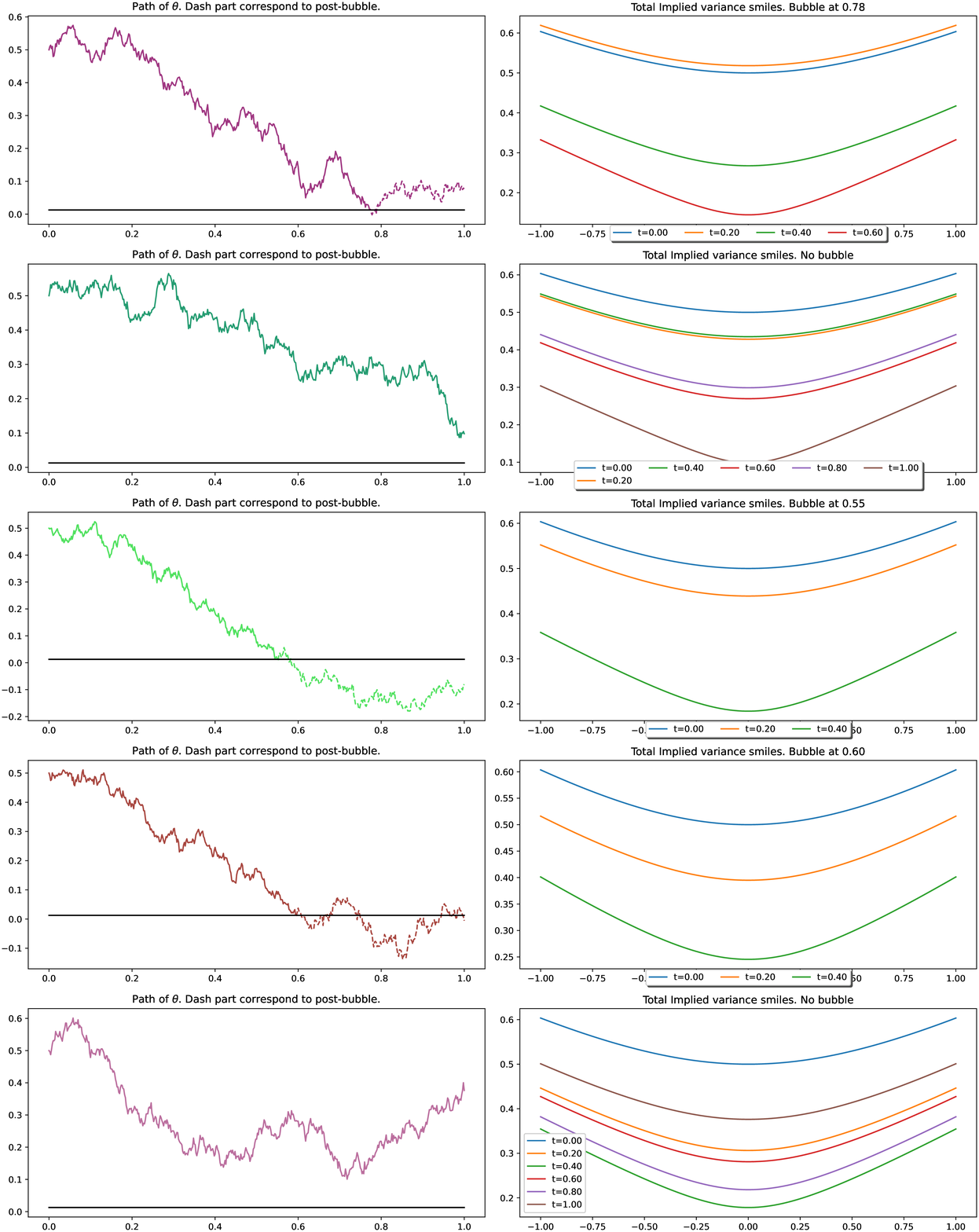}
\caption{Dynamics of the paths for $(\theta_t)_{t\in [0,1]}$, together with total implied variance smiles, at least up to the bubble time.}
\label{fig:ThetaDynamicsSmiles}
\end{figure}

\newpage
%%%%%%%%%%%%%%%%%%%%%%%%%%%%%%%%%%%%%
\subsubsection{The Heston-SSVI bubble}
We now investigate the existence of an SSVI bubble in the Heston model, 
that is when the variance process~$(v_t)_{t\geq 0}$ is a Feller diffusion of the form
$$
\D v_t = \kappa(\vb - v_t)\D t + \xi\sqrt{v_t}\,\D B^v_t,
\qquad v_0, \kappa, \vb, \xi>0.
$$
Yamada-Watanabe conditions~\cite[Proposition 2.13]{Karatzas}
guarantee a unique strong solution, and the Feller condition $2\kappa\vb\geq \xi^2$ 
that the latter never reaches the origin.
We consider two cases, depending on whether~$B^v$ and~$B^\theta$ are fully or anti correlated
(and independent of~$B^S$).
In the former case, \eqref{eq:SolutionSystem} reads
$$
\D\theta_t
 = \left(\frac{\psi_T^2}{16} - 1\right)v_t\D t - \psi_T \sqrt{v_t}B_t^\theta
 = \left(\frac{\psi_T^2}{16} - 1\right)v_t\D t - \frac{\psi_T}{\xi} \left(\D v_t - \kappa(\vb - v_t)\D t\right),
$$
so that
$\displaystyle
\theta_t = \theta_0 
+ \left(\frac{\psi_T^2}{16} - 1- \frac{\kappa\psi_T}{\xi}\right)\Vv_t - \frac{\psi_T}{\xi} (v_t - v_0) +\frac{\kappa\vb\psi_T t}{\xi}$.
When $\la B^\theta, B^v\ra_t = -\D t$, then 
$\displaystyle 
\theta_t = \theta_0 
+ \left(\frac{\psi_T^2}{16} - 1+ \frac{\kappa\psi_T}{\xi}\right)\Vv_t + \frac{\psi_T}{\xi} (v_t -v_0) - \frac{\kappa\vb\psi_T t}{\xi}$.

We can then proceed as in the Black-Scholes case, at least for the innermost conditional expectation; the subsequent computations are still more intricate, since in general the law of $\tau$ is unknown. The symmetric SSVI bubble equation
provides in this case some information of the joint law of $\tau, v_\tau$ and $\Vv_\tau$.

%%%%%%%%%%%%%%%%%%%%%%%%%%%%%%%%%%%%%%%
%%%%%%%%%%%%%%%%%%%%%%%%%%%%%%%%%%%%%%%
\section{Conclusion}
We showed that consistent joint dynamics of a stock price and symmetric SSVI smiles 
over a finite time horizon is infeasible, albeit with two key by-products:
first, a generalisation to non-symmetric smiles may potentially provide an answer, 
but the heavy computations require a dedicated analysis;
second, we identified a new regime of implied volatility bubbles, vanishing at maturity.
These bubbles are singular as arbitrage opportunities 
(with strategies including unwinding positions at maturity) cannot be achieved during the lifespan of the bubble. 
They might account for
some empirical findings in high-frequency data of Equity indices and options thereupon, 
as in~\cite{abergel2012drives}. 
One could even conjecture that those bubbles are in fact more a rule than an exception, 
with real-life dynamics over the lifetime of an option is a succession of such bubbles.

The next steps will be to investigate the present setup in a wider class of models, 
including correlation between the underlying stock and its driving volatility.
It indeed remains to be seen whether jump processes or rough volatility dynamics for example
may provide non-trivial results. 

%%%%%%%%%%%%%%%%%%%%%%%%%%%%%%%%%%%%%%%
%%%%%%%%%%%%%%%%%%%%%%%%%%%%%%%%%%%%%%%

\end{document}